\newtheorem{thm}{Theorem}[section]
\newtheorem{cor}[thm]{Corollary}
\newtheorem{lem}[thm]{Lemma}
\newtheorem{pro}[thm]{Proposition}
\theoremstyle{definition}
\newtheorem{defn}[thm]{Definition}
\theoremstyle{question}
\newtheorem{que}[thm]{Question}
\theoremstyle{remark}
\newtheorem{rem}[thm]{Remark}
\theoremstyle{example}
\theoremstyle{note}
\newcommand{\J}{\mathcal{J}}
\def\PAut{\mathrm{PAut}}
\begin{document}

\title{On the  binary linear constant weight codes and their autormorphism groups }

\author{ Murat Altunbulak, Fatma Altunbulak  Aksu  }

\address{Department of Mathematics, Mimar Sinan Fine Arts University, \c{S}\.I\c{s}l\.I, \.Istanbul, Turkey}

\email{fatma.altunbulak@msgsu.edu.tr, }

\address{Department of Mathematics, Dokuz Eyl\"ul University, Buca, \.Izm\.Ir, Turkey}

\email{murat.altunbulak@deu.edu.tr,}

\subjclass[2020]{94B05, 11T71, 20B05}

\thanks{}

\keywords{constant weight code, permutation automorphism, symmetric difference of supports}

\date{\today}

\dedicatory{}

\commby{}

\begin{abstract} We give a characterization for the binary linear constant weight codes by using the symmetric difference of the supports of the codewords. This characterization gives a correspondence between the set of binary linear constant weight codes and the set of partitions  for the union  of supports of the codewords. By using this correspondence, we present a formula for the order of the automorphism group  of a binary linear constant weight code in terms of its parameters. This formula is a key step to determine more algebraic structures on constant weight codes with given parameters.  Bonisoli [Bonisoli, A.: 
Every equidistant linear code is a sequence of dual Hamming codes. Ars Combinatoria 18, 181--186 (1984)] proves  that the $q$-ary linear constant weight codes with the same parameters are  equivalent (for the binary case permutation equivalent). We also give an alternative proof for Bonisoli's theorem by presenting  an explicit permutation on symmetric difference of the supports of the codewords which gives the permutation equivalence between the binary linear constant weight codes.
\keywords{constant weight code, permutation automorphism, symmetric difference, support of a codeword
}
\end{abstract}
\maketitle
\section{Introduction} \label{Sec:1}
A binary linear code $C$ of length $n$ is  a subspace of the vector space $\mathbb{F}_2^n$.  The support of a non-zero codeword $c=(c_1,c_2,\ldots,c_n)$ in $C$ is  the set of its non-zero coordinate positions, i.e. ${\rm Supp}(c)=\{i\in\{1,2,\ldots,n\}\ : \ c_i=1\}$. The Hamming weight  of $c$  is the number of elements of the support of $c$. A binary linear code is called constant weight code if every non-zero codeword has the same Hamming weight.  There is an increasing interest on binary constant weight codes (see \cite{AgrellVardyZeger,BrouwerShearerSloaneSmith,CheeLing,EtzionSchwarz,EtzionVardy,Gashkov,Hou,MacWilliamsSloane,NguyenGyorfiMassey,Ostergard,SmithHughesPerkins}).   The constant weight codes are building blocks of the Hamming scheme \cite{EtzionSchwarz}. This specific family of codes have many applications \cite{NguyenGyorfiMassey,BrouwerShearerSloaneSmith}. A relation between the design theory and the constant weight codes are investigated in \cite{Ostergard}. There in,   the supports of the codewords in a constant weight code form a special packing in the design theory. Inspired by these relations, we  consider on the supports of the codewords in a constant weight code and then by using the repeated symmetric difference of the supports of the codewords, we  give a characterization of the binary linear constant weight codes (see Proposition \ref{intersection of Supports} and Theorem \ref{converse}). This characterization  mainly depends on partitions for the union of supports of basis  elements of the given code. By using such partitions for the union of supports, we give an explicit  construction for the binary linear constant weight codes (see Theorem \ref{construction}).

One of the important problems in coding theory is the code equivalence problem.  The symmetric group $S_n$   acts on $\mathbb{F}_2^n$ by permuting the coordinates of elements.  Two linear codes $C,D$ of length $n$ are said to be permutation equivalent if there exists a permutation $\sigma\in S_n$ such that $\sigma(C)=D$. In most cases, it is quite difficult to decide that two given linear codes are permutation equivalent. It is also very important to find the explicit permutation which gives the permutation equivalence between these linear codes. In this paper,  we   give a simple proof for the permutation equivalence of the binary linear constant weight codes ( see Theorem \ref{Perm_Equiv}) by writing an explicit permutation.  We should note that in \cite{Bonisoli},  Bonisoli proves the equivalence of the $q$-ary constant weight codes of the given parameters by an induction argument (see the details in Remark \ref{explanation}).

 For a linear code  $C$, the subgroup $\{\sigma\in S_n| \sigma(C)=C\}$ of $S_n$ is called permutation automorphism group of $C$ and denoted by $\PAut(C)$.  For a given linear code $C$, whenever $\PAut(C)$ is non-trivial, we get many algebraic properties such as being
$\mathbb{F}_2 G$-module  where $G$ is a non-trivial subgroup of $\PAut(C)$ or being a group code or being a quasi group code (see \cite{BernalRioSimon,Borello,BorelloWillems}). The characterization of binary linear constant weight codes in terms of partitions for the union of supports of codewords is a key step to calculate the order of the permutation automorphism group of the given code. By using this characterization, 
we give a formula for the order of the permutation automorphism group of a binary linear constant weight code. The formula is expressed in terms of the parameters of the code (see Theorem \ref{order_of_PAut 2}). For the order of $\PAut(C)$, such formulas in terms of parameters of the code is not common. 
As an application of this formula and the explicit construction in Theorem \ref{construction}, we calculate $\PAut(C)$ for some specific infinite families of constant weight codes. Moreover, we determine that for some specific parameters, there is no binary linear constant weight code which is also a group code.

The structure of the paper is as follows. In section 2, we give some basic properties of the repeated symmetric difference of some family of sets and then we relate these properties with supports of codewords in a given constant weight code. In section 3, we give a characterization for a given  constant weight code in terms of the symmetric difference of supports of codewords. As a corollary, we get lower bounds  for the weight and the length of a given   constant weight code in terms of its dimension. In section 4, we give a construction for constant weight codes of arbitrary dimension. In  section 5, we give a formula for the  order of permutation automorphism group of a given constant weight code in terms of its dimension, length and weight. We also present some applications of this formula. 

\section{Preliminaries} \label{Sec:2}
All codes in this paper are assumed to be binary linear codes. For a binary linear code $C$, the minimum weight $wt(C)$ of $C$ is defined as ${\rm min}\{wt(c)~| ~0\neq c\in C\}$.
A binary linear code $C$ is called constant weight code  if $wt(C)=wt(c)$ for any non-zero codeword $c\in C$.
By definition, all $1$-dimensional codes are constant weight code. So throughout the paper, when we write constant weight code, we mean a constant weight code of dimension at least $2$.
In most cases, whenever length is fixed,  it is not the case that we can find a constant weight code for any dimension. For example, there is no $3$-dimensional constant weight code of length $4$. There is also restriction about the weight of  a constant weight code of a given length. For example, if $C$ is a constant weight code of length $4$, then $wt(C)=2$  . So it is worthful to investigate the relations between the parameters (dimension, length, weight). In fact the relations on these parameters are studied in \cite{Bonisoli} by using Plotkin bound. Here we get the relations between these parameters as a corollary of a characterization of binary linear constant weight codes in terms of the symmetric difference of the supports of codewords.

Let $c_1,c_2 \in C$ be any two different non-zero codewords with supports $ A_1, A_2$ respectively. Suppose $| A_1|=w_1,| A_2|=w_2$ and $| A_1\cap  A_2|=m$.  We may write $ A_1 = \{i_1,i_2,\ldots,i_{w_1-m},j_1,\ldots,j_m\}$ and $ A_2 = \{l_1,l_2,\ldots,l_{w_2-m},j_1,\ldots,j_m\}$. As $C$ is a binary linear code, we have $c_1+c_2\neq 0$. Then $ A = {\rm Supp}(c_1+c_2) = \{i_1,i_2,\ldots,i_{w_1-m},l_1,l_2,\ldots,l_{w_2-m}\}$ which implies that $| A| = w_1+w_2-2m$. So for binary linear constant weight codes, we have the following simple observation on supports of the codewords.
\begin{lem}\label{intersectweight} Let $C$ be a constant weight code of weight $w$. If ${\rm dim}(C)\geq 2$, then for any different non-zero codewords $c_1, c_2$, we have $|{\rm Supp}(c_1)\cap {\rm Supp}(c_2)|=\frac{w}{2}$.
\end{lem}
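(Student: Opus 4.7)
The plan is to apply the linearity of $C$ together with the preceding observation about supports of sums of codewords. Let $c_1, c_2 \in C$ be two distinct nonzero codewords with supports $A_1, A_2$, and set $m = |A_1 \cap A_2|$. Since $C$ is a constant weight code of weight $w$, both $|A_1| = w$ and $|A_2| = w$.

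Next, because $C$ is linear and $c_1 \neq c_2$, the sum $c_1 + c_2$ lies in $C$ and is nonzero. Applying the computation carried out in the paragraph immediately preceding the lemma (with $w_1 = w_2 = w$), the support of $c_1 + c_2$ is the symmetric difference $A_1 \triangle A_2$, whose cardinality is
\[
|A_1| + |A_2| - 2|A_1 \cap A_2| = 2w - 2m.
\]

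Finally, since $c_1 + c_2$ is a nonzero element of the constant weight code $C$, its weight must also equal $w$. Equating $2w - 2m = w$ yields $m = w/2$, which is the desired conclusion.

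There is essentially no obstacle here; the only thing to check is that nothing degenerate happens, namely that $c_1 + c_2 \neq 0$ (which follows from $c_1 \neq c_2$ in characteristic $2$) so that the constant weight hypothesis can be invoked on it. As a side consequence, this argument also shows that $w$ must be even whenever $\dim(C) \geq 2$, a fact that will likely be used in the subsequent characterization results.
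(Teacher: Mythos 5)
Your proof is correct and follows essentially the same route as the paper's: both use linearity to note $c_1+c_2$ is a nonzero codeword, compute $wt(c_1+c_2)=2w-2m$ via the symmetric difference of supports, and solve $2w-2m=w$ for $m$. No issues.
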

\begin{proof} Let $c_1, c_2$ be two different  non-zero codewords in $C$. Then $wt(c_1)=wt(c_2)=w$. As $C$ is constant weight code, we have $wt(c_1+c_2)=w$. Let
$|{\rm Supp}(c_1)\cap {\rm Supp}(c_2)|=m $. Then $m$ is non-zero (otherwise we have $wt(c_1+c_2)=2w$) and we have   $wt(c_1+c_2)=2w-2m=w$. This implies that $m=\dfrac{w}{2}$.
\end{proof}

Hence we get a very well known fact on weights.
\begin{cor}\label{evenweight} If $C$ is a constant weight code of dimension $\geq 2$, then the weight of $C$ is even.
\end{cor}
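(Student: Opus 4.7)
The plan is essentially a one-line deduction from Lemma \ref{intersectweight}. Since the dimension of $C$ is at least $2$, the code contains at least two distinct non-zero codewords $c_1$ and $c_2$; for example, one can take any two distinct elements of a basis. By Lemma \ref{intersectweight}, the intersection of their supports has cardinality exactly $w/2$, where $w = wt(C)$.

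The key observation is that $|{\rm Supp}(c_1) \cap {\rm Supp}(c_2)|$ is, by definition, a non-negative integer, so $w/2 \in \mathbb{Z}$ forces $w$ to be even. I would simply write this out in two sentences: invoke the lemma to produce the equality $|{\rm Supp}(c_1) \cap {\rm Supp}(c_2)| = w/2$, then conclude integrality of $w/2$ forces $2 \mid w$.

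There is no real obstacle here; the work has already been done in Lemma \ref{intersectweight}. The only thing one needs to be slightly careful about is justifying the existence of two distinct non-zero codewords, which is immediate from $\dim(C) \geq 2$ (any two-dimensional subspace of $\mathbb{F}_2^n$ contains three non-zero vectors). No further computation or induction is required.
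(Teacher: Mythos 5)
Your proposal is correct and matches the paper's intent exactly: the corollary is stated immediately after Lemma \ref{intersectweight} with no separate proof, precisely because the integrality of $|{\rm Supp}(c_1)\cap{\rm Supp}(c_2)|=w/2$ is the whole argument. Your added remark on the existence of two distinct non-zero codewords from $\dim(C)\geq 2$ is a harmless explicit justification of a step the paper leaves implicit.
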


Note that, we have ${\rm Supp}(c_1+c_2) =  A_1 \Delta  A_2$  whenever $c_1+c_2\neq 0$. Here $A_1 \Delta  A_2$ is  the symmetric difference of the supports of $c_1$ and $c_2$.   If we have $k$ linearly independent codewords $c_1,\ldots,c_k$, then the support of the sum $c_1+\dots +c_k$ is equal to the repeated symmetric difference of the supports of $c_1,\ldots,c_k$.
\subsection{Repeated Symmetric Differences}\label{Sec:3}
In this subsection, we investigate some properties of the repeated symmetric difference of  sets.
\begin{defn}\label{symdif} The repeated symmetric difference of  the sets $A_1,A_2,\ldots,A_k$ or the $k$-ary symmetric difference of $A_i$'s is the set $A_1\Delta A_2\Delta\cdots\Delta A_k$ of elements which are in an odd number of sets $A_i$'s, i.e.,
$$A_1\Delta A_2\Delta\cdots\Delta A_k = \{a\in \bigcup A_i \; : \; |\{i \, : \, a\in A_i\}| \mbox{ is odd}\}.$$
\end{defn}
\begin{defn}\label{AI} Let $A_1,A_2,\ldots,A_k$ be  non-empty sets. For a non-empty subset $I\subseteq [k]=\{1,2,\ldots,k\}$, we define the set
$$A^I = \left(\bigcap_{i\in I}A_i\right)\backslash\left(\bigcup_{j\notin I}A_j\right),$$
\end{defn}
Combining  Definitions \ref{symdif} and \ref{AI}, we get the following lemma.
\begin{lem}\label{oddindex} For non-empty sets $A_1,A_2,\ldots,A_k$ we have $$A_1\Delta A_2\Delta\cdots\Delta A_k = \bigcup_{\substack{I\subseteq[k],\\|I|~ is ~odd}} A^I.$$
\end{lem}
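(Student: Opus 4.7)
The plan is to prove the equality by element-wise double inclusion, using a simple bookkeeping device: for each $a \in \bigcup_{i=1}^k A_i$, let $J(a) = \{i \in [k] : a \in A_i\}$ denote the set of indices of the $A_i$ that contain $a$. With this notation, both sides of the claimed identity admit a clean description purely in terms of $J(a)$, and the proof reduces to observing that the two descriptions coincide.

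For the left-hand side, Definition \ref{symdif} directly says $a \in A_1 \Delta A_2 \Delta \cdots \Delta A_k$ if and only if $|J(a)|$ is odd. For the right-hand side, I would first unpack Definition \ref{AI}: the condition $a \in A^I$ means $a \in A_i$ for all $i \in I$ and $a \notin A_j$ for all $j \notin I$, i.e.\ exactly $J(a) = I$. In particular the family $\{A^I : \emptyset \neq I \subseteq [k]\}$ is pairwise disjoint and each $a \in \bigcup A_i$ belongs to exactly one $A^I$, namely $A^{J(a)}$. Hence $a \in \bigcup_{|I| \text{ odd}} A^I$ if and only if $J(a)$ itself has odd cardinality, which matches the characterization of the left-hand side.

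Combining these two equivalences gives the set equality in both directions. There is no real obstacle here; the only thing to be careful about is stating that $A^{J(a)}$ is well-defined and nonempty (it contains $a$), which is why the union on the right is correctly indexed by nonempty subsets of $[k]$. A one-line remark that $\{A^I\}_{\emptyset \neq I \subseteq [k]}$ partitions $\bigcup_{i} A_i$ makes the argument transparent and will also be useful later when this partition structure is invoked in the characterization of constant weight codes.
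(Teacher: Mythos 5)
Your proof is correct and follows essentially the same route as the paper's: both argue by double inclusion using the observation that $a \in A^I$ exactly when $I = \{i : a \in A_i\}$, so membership in the right-hand union is equivalent to the oddness condition defining the symmetric difference. Your explicit remark that the nonempty $A^I$ partition $\bigcup_i A_i$ is a harmless (and useful) addition that the paper defers to Lemma \ref{omega}.
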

\begin{proof}
By definition, the set $A^I$ contains those elements in $\cup A_i $ which belong to only those $A_i$'s for $i$ in $I$. So if $|I|$ is odd, by definition of $k$-ary symmetric difference we have $A^I \subseteq A_1\Delta A_2\Delta\cdots\Delta A_k$, and hence we have
$$\bigcup_{\substack{I\subseteq[k],\\|I|~is ~odd}} A^I \subseteq A_1\Delta A_2\Delta\cdots\Delta A_k.$$
Conversely, if $x\in A_1\Delta A_2\Delta\cdots\Delta A_k$, then the number $|\{i \, : \, x\in A_i\}| $ is odd, by definition. So $x\in A^I$, where $I = \{i \, : \, x\in A_i\}$ and $|I|$ is odd. Therefore $x\in A^I\subseteq \bigcup_{\substack{I\subseteq[k],\\|I| ~is~ odd}} A^I $. Hence, we get that
$$A_1\Delta A_2\Delta\cdots\Delta A_k = \bigcup_{\substack{I\subseteq[k],\\|I|~is~odd}} A^I.$$
\end{proof}
If $I=\{i_1,i_2,\ldots,i_l\}$, then we   use the notation $A^{i_1i_2\ldots i_l}$ for $A^I$. The following lemma is an easy observation. For completeness, we give a simple proof for it.

\begin{lem}\label{omega} With the same notation in the definition, the following results hold

{\bf{(i)}} If $I\neq J$, then $ A^I\cap A^J=\emptyset$.

{\bf{(ii)}} For any $l\in [k]$, we have $ A_l = \bigcup_{l\in I, I \subseteq [k]} A^I$.

{\bf{(iii)}} $\{ A^I \ : \ A^I\neq\emptyset,\ I\subseteq [k], |I|\ge1\}$ form a partition for the set $\bigcup_{i=1}^k A_i$.
\end{lem}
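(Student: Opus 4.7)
The plan is to base everything on the simple observation that each point $x \in \bigcup_{i=1}^k A_i$ has a canonically associated index set, namely $I(x) = \{i \in [k] : x \in A_i\}$, and that $A^I$ is exactly the set of points whose associated index set equals $I$. Once this is verified, all three statements follow almost immediately.

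For part (i), I would first check that $x \in A^I$ if and only if $I(x) = I$. Indeed, by the definition of $A^I$, saying $x \in A^I$ means $x \in A_i$ for every $i \in I$ and $x \notin A_j$ for every $j \notin I$, which is exactly the statement $I(x) = I$. Since every $x$ has a unique $I(x)$, the sets $A^I$ corresponding to distinct index sets $I$ are pairwise disjoint, giving (i).

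For part (ii), I would argue both inclusions directly. If $x \in A_l$, then $l \in I(x)$ and by the characterization above $x \in A^{I(x)}$, which sits inside the union $\bigcup_{l \in I,\, I \subseteq [k]} A^I$. Conversely, if $x \in A^I$ with $l \in I$, then $x \in \bigcap_{i \in I} A_i \subseteq A_l$. For part (iii), disjointness of the non-empty $A^I$'s is immediate from (i). To check that they cover $\bigcup_{i=1}^k A_i$, I would note that each $A_l$ is covered by (ii), so the union of the $A^I$'s covers $\bigcup A_l$; for the reverse inclusion, each non-empty $A^I$ with $|I| \geq 1$ lies in $A_i$ for any chosen $i \in I$, hence in $\bigcup_{i=1}^k A_i$.

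There is no real obstacle here; the entire content of the lemma is the observation that the assignment $x \mapsto I(x)$ partitions $\bigcup A_i$ into fibers, and the $A^I$ are precisely those fibers. The only point requiring a little care is not to forget the hypothesis $|I| \geq 1$ in (iii), which is automatic because $I(x)$ is non-empty for any $x$ in the union.
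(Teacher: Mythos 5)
Your proposal is correct and follows essentially the same route as the paper: both are direct element-wise verifications from the definition of $A^I$, and your observation that $x\in A^I$ exactly when $\{i : x\in A_i\}=I$ is the same device the paper already uses (e.g.\ in the proof of Lemma \ref{oddindex}), just stated up front. Your packaging via the map $x\mapsto I(x)$ is a slightly cleaner organization of the identical argument, so there is nothing substantive to add.
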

\begin{proof}

{\bf{(i)}} Assume $I\neq J$.  Without loss of generality we can  assume that there exists $i_0 \in I$ such that $i_0\notin J$.  Then by definition, for any $x\in A^I$ we have $x\in A_i$ for all $i \in I$ and $x\notin A_j$ for all $j\notin I$. In particular, $x\in A_{i_0}$. Since $i_0\notin J$, by definition $x\notin A^J$. Hence, $A^I\cap A^J = \emptyset$.

{\bf{(ii)}} Since $A^I = \left(\bigcap_{i\in I}A_i\right)\backslash\left(\bigcup_{j\notin I}A_j\right)\subseteq \left(\bigcap_{i\in I}A_i\right)\subseteq A_i$ for all $i \in I$. If $l\in I$, then $ \bigcup_{\substack{l\in I, I\subseteq[k]}} A^I \subseteq A_l $. Conversely, if $x\in A_l$, then we have either $x\notin A_i$ for all $i\neq l$, or $x\in A_i$ for some $i \neq l$. As a result, we have either $x\in A^l$ or  $x\in A^{il}$. In any case we get, $x\in \bigcup_{\substack{l\in I, I\subseteq[k]\\}} A^I = A^l\sqcup A^{il}\sqcup\cdots$. Hence, we obtain $ A_l=\bigcup_{\substack{l\in I, I\subseteq[k]}} A^I$.

{\bf{(iii)}} Easily follows from $i)$ and $ii)$.
\end{proof}
\begin{pro} \label{intersections} For any $2\le l\le k$, we have
\begin{equation}\label{intersection in terms of AI}
A_{i_1}\cap A_{i_2}\cap\cdots\cap A_{i_l}= \bigsqcup_{\substack{I\subseteq[l],\\ \{i_1,i_2,\ldots,i_l\}\subseteq I}} A^I
\end{equation}
\end{pro}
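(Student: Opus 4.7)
The plan is to prove Proposition \ref{intersections} by a direct double-inclusion argument, leveraging parts (i), (ii), and (iii) of Lemma \ref{omega}. The key observation is that by part (iii), the non-empty sets $A^I$ form a partition of $\bigcup_{i=1}^k A_i$, so every element of this union lies in exactly one $A^I$, with $I$ uniquely determined as the index set $\{i : x \in A_i\}$. This will let me replace membership in the left-hand side intersection by a condition purely on the containment of the index set.

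First I would prove the reverse inclusion, which is immediate: if $\{i_1, \ldots, i_l\} \subseteq I$, then by the argument used in the proof of Lemma \ref{omega}(ii), $A^I \subseteq \bigcap_{i \in I} A_i \subseteq A_{i_1} \cap \cdots \cap A_{i_l}$, so the union on the right is contained in the intersection on the left. For the forward inclusion, I would pick $x \in A_{i_1} \cap \cdots \cap A_{i_l}$, and let $J = \{i \in [k] : x \in A_i\}$. Then $\{i_1, \ldots, i_l\} \subseteq J$ by construction, and by the very definition of $A^J$ (elements that lie in every $A_i$ for $i \in J$ but in no $A_j$ for $j \notin J$), we have $x \in A^J$. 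Hence $x$ lies in the right-hand union, with $I = J$.

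Finally, to justify that the union is a disjoint union (i.e.\ that the $\sqcup$ symbol is warranted), I would cite Lemma \ref{omega}(i), which says that distinct index sets $I \neq J$ give disjoint pieces $A^I \cap A^J = \emptyset$. Combining the two inclusions with disjointness yields the claimed identity.

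I do not expect any serious obstacle: the statement is essentially a bookkeeping consequence of the partition in Lemma \ref{omega}(iii). The only subtle point worth flagging is the apparent typo in the range of the indexing set in the displayed formula (the condition should be $I \subseteq [k]$ with $\{i_1, \ldots, i_l\} \subseteq I$, not $I \subseteq [l]$, which would be inconsistent whenever some $i_j > l$); I would silently adopt the natural reading $I \subseteq [k]$ in the proof.
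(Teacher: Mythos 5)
Your proposal is correct and follows essentially the same double-inclusion argument as the paper's own proof; the only cosmetic difference is that you handle the forward inclusion in one step via $J=\{i : x\in A_i\}$, whereas the paper splits it into two cases that both reduce to the same conclusion. Your observation about the indexing typo is also right: the paper's proof itself uses $I\subseteq[k]$ rather than the $I\subseteq[l]$ appearing in the displayed statement.
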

\begin{proof}
If $\{i_1,i_2,\ldots,i_l\}\subseteq I$, then $A^I = \left(\bigcap_{i\in I}A_i\right)\backslash\left(\bigcup_{j\notin I}A_j\right) \subseteq \bigcap_{i\in I}A_i \subseteq A_{i_1}\cap A_{i_2}\cap\cdots\cap A_{i_l}.$ Therefore,
$$\bigsqcup_{\substack{I\subseteq[k],\\ \{i_1,i_2,\ldots,i_l\}\subseteq I}}A^I \subseteq A_{i_1}\cap A_{i_2}\cap\cdots\cap A_{i_l}.$$
Conversely, if $ x \in  A_{i_1}\cap A_{i_2}\cap\cdots\cap A_{i_l}$, then $x\in A_{i_j}$ for all $j=1,2,\ldots,l$. Now, we have two possible cases: either $x\notin A_r$ for any $r\notin \{i_1,i_2,\ldots,i_l\}$, or $x\in A_r$ for some $r\notin \{i_1,i_2,\ldots,i_l\}$. In the first case $x\in A^I$, for $I = \{i_1,i_2,\ldots,i_l\}$. For the latter case $x\in A^I$ for $I = \{i \ : \ x\in A_i\}$, which contains $\{i_1,i_2,\ldots,i_l\}$. So in any case, $x\in A^I$, where $\{i_1,i_2,\ldots,i_l\}\subseteq I$. Therefore,
$$A_{i_1}\cap A_{i_2}\cap\cdots\cap A_{i_l}\subseteq \bigsqcup_{\substack{I\subseteq[k],\\ \{i_1,i_2,\ldots,i_l\}\subseteq I}} A^I$$
Hence, we get the desired equality.
\end{proof}

Let $J = \{j_1,j_2,\ldots,j_l\}$ be a non-empty subset of $[k]$ and $\mathcal{A}=\{A_1,A_2,\ldots,A_k\}$. We denote the symmetric difference relative to $J$ as  $\Delta^J\mathcal{A} = A_{j_1}\Delta A_{j_2}\Delta \cdots\Delta A_{j_l}$. Then by definition of $k$-ary symmetric difference of sets we have the following proposition:

\begin{pro}\label{Symmetric_Difference}
In terms of the sets $A^I$, where $\emptyset \neq I\subseteq [k]$, $\Delta^J\mathcal{A}$ can be expressed as
$$\Delta^J\mathcal{A} = \bigsqcup_{\substack{I\subseteq[k],\\
                  |J\cap I|~is ~odd}} A^I.$$
\end{pro}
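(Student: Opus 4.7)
The plan is to read off the result directly from the partition statement of Lemma \ref{omega}(iii) together with the parity characterization in Definition \ref{symdif}, so essentially no new ideas are required: all the bookkeeping has already been done.

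First I would observe that, by Lemma \ref{omega}(iii), every element $x\in\bigcup_{i=1}^k A_i$ lies in exactly one set of the form $A^{I_x}$, where
\[
I_x \;=\; \{i\in [k]\ :\ x\in A_i\}.
\]
Since $\Delta^J\mathcal{A}\subseteq A_{j_1}\cup\cdots\cup A_{j_l}\subseteq \bigcup_{i=1}^k A_i$, the problem reduces to deciding, for each nonempty $I\subseteq [k]$, whether $A^I$ is contained in $\Delta^J\mathcal{A}$ or is disjoint from it.

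Next I would fix such an $I$ and an element $x\in A^I$. The defining property of $A^I$ is exactly that $\{i\in[k]:x\in A_i\}=I$, so in particular
\[
\bigl|\{j\in J\ :\ x\in A_j\}\bigr| \;=\; |J\cap I|.
\]
By Definition \ref{symdif}, $x\in \Delta^J\mathcal{A}=A_{j_1}\Delta A_{j_2}\Delta\cdots\Delta A_{j_l}$ iff this cardinality is odd. Since the condition depends only on $I$ (not on the particular $x\in A^I$), we conclude that $A^I\subseteq \Delta^J\mathcal{A}$ when $|J\cap I|$ is odd, and $A^I\cap \Delta^J\mathcal{A}=\emptyset$ otherwise.

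Finally, combining this dichotomy with Lemma \ref{omega}(i), which guarantees that the sets $A^I$ are pairwise disjoint, gives
\[
\Delta^J\mathcal{A} \;=\; \bigsqcup_{\substack{I\subseteq[k],\\ |J\cap I|\ \text{odd}}} A^I,
\]
as claimed. The main (minor) subtlety to keep in mind is that $I$ must range over all nonempty subsets of $[k]$ rather than only subsets of $J$, since elements of $\Delta^J\mathcal{A}$ may also belong to sets $A_i$ with $i\notin J$; this is exactly why the parity condition is phrased in terms of $|J\cap I|$ and not $|I|$.
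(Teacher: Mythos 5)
Your proof is correct and follows essentially the same route as the paper's: both arguments reduce to the observation that an element $x\in A^I$ satisfies $\{j\in J : x\in A_j\}=J\cap I$, so membership in $\Delta^J\mathcal{A}$ is decided by the parity of $|J\cap I|$. The only cosmetic difference is that you organize this as a dichotomy over the blocks of the partition from Lemma \ref{omega}(iii), whereas the paper runs two separate inclusions and routes the forward one through Proposition \ref{intersections}; your version is, if anything, slightly cleaner.
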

\begin{proof}
If $x\in \Delta^J\mathcal{A}$, then $x\in \bigcup_{j\in J}A_j$ and $|\{j\in J \ : \ x\in A_j\}|$ is odd. This implies that $x\in \bigcap_{j\in J'}A_j$, where $J' = \{j\in J \ : \ x\in A_j\}$. By equation (\ref{intersection in terms of AI}) we can write $\bigcap_{j\in J'}A_j=\bigsqcup_{\substack{I\subseteq[k],\\J'\subseteq I}} A^I $. Since $\bigsqcup_{\substack{I\subseteq[k],\\J'\subseteq I}} A^I \subseteq \bigsqcup_{\substack{I\subseteq[k],\\|J\cap I|~is~odd}} A^I$, we get
$$\Delta^J\mathcal{A} \subseteq \bigsqcup_{\substack{I\subseteq[k],\\|J\cap I|~is~odd}} A^I.$$
On the other hand, if $x\in \bigsqcup_{\substack{I\subseteq[k],\\|J\cap I|~is~odd}} A^I$, then $x\in A^I$ for some $I\subseteq [k]$ such that $|J\cap I|$ is odd. By the definition of the sets $A^I$'s, this means that  $x\in \bigcap_{j\in J\cap I}A_j$ and $x\notin A_j$ for $j\in J\backslash I$. So the set $\{j\in J \ : \ x\in A_j\} = J\cap I$ has odd number of elements. Therefore, $x\in \Delta^J\mathcal{A}$. So
$$\bigsqcup_{\substack{I\subseteq[k],\\|J\cap I|~is~odd}} A^I \subseteq \Delta^J\mathcal{A}, $$
and hence we get
$$\Delta^J\mathcal{A} = \bigsqcup_{\substack{I\subseteq[k],\\|J\cap I|~is~odd}} A^I.$$
\end{proof}
\begin{pro}\label{symmetricdif} Let $C$ be a binary linear code and   $c_1,\ldots,c_k \in C$ linearly independent codewords with supports  $ A_1, A_2,\ldots, A_k$.  Then we have
\begin{eqnarray*}
&&{\rm Supp}(c_1+c_2+\ldots+c_k) = \bigsqcup_{\substack{ I\subseteq[k],\\ |I| \mbox{ \scriptsize{is odd}}}} A^I
\end{eqnarray*}
Equivalently,
$${\rm wt}(c_1+c_2+\ldots+c_k) = \sum_{\substack{I\subseteq[k],\\|I| \mbox{ \scriptsize{is odd}}}}| A^I|.$$
\end{pro}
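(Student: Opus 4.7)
The plan is to recognize this statement as a direct synthesis of the observations already established in the preliminaries, so essentially no new combinatorics is needed.

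First I would record the translation between sums and symmetric differences: for any two binary codewords $c,c'$ with $c+c' \neq 0$, one has ${\rm Supp}(c+c') = {\rm Supp}(c)\,\Delta\,{\rm Supp}(c')$, as noted in the paragraph following Corollary \ref{evenweight}. Iterating (and using that the symmetric difference is associative), the linear independence of $c_1,\dots,c_k$ ensures $c_1+\cdots+c_k\neq 0$, and
$${\rm Supp}(c_1+c_2+\cdots+c_k) \;=\; A_1\,\Delta\,A_2\,\Delta\,\cdots\,\Delta\,A_k.$$
This is the only place where linear independence really enters; without it the sum could be $0$, but the identity on sets would still go through trivially.

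Next I would invoke Lemma \ref{oddindex} to rewrite the right-hand side as $\bigcup_{|I|\text{ odd}} A^I$, and then appeal to Lemma \ref{omega}(i) to upgrade this union to a disjoint union: since $A^I\cap A^J = \emptyset$ whenever $I\neq J$, the sets indexed by odd subsets of $[k]$ are pairwise disjoint. Combining these two identities gives the first display
$${\rm Supp}(c_1+c_2+\cdots+c_k) \;=\; \bigsqcup_{\substack{I\subseteq[k],\\ |I|\text{ is odd}}} A^I.$$
Alternatively, the same conclusion follows from Proposition \ref{Symmetric_Difference} applied to $J=[k]$, where $|J\cap I|=|I|$, but the direct route via Lemma \ref{oddindex} is shorter.

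Finally, the weight identity is obtained by taking cardinalities in the disjoint decomposition: since ${\rm wt}(c)=|{\rm Supp}(c)|$ and the union is disjoint, additivity of cardinality gives
$${\rm wt}(c_1+c_2+\cdots+c_k) \;=\; \sum_{\substack{I\subseteq[k],\\ |I|\text{ is odd}}} |A^I|,$$
completing the proof. There is no real obstacle here; the proposition is a clean corollary of Lemmas \ref{oddindex} and \ref{omega}, and the only subtlety worth flagging is the use of linear independence to guarantee that the symmetric-difference identity refers to a genuine (nonzero) codeword.
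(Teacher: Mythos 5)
Your proof is correct and follows essentially the same route as the paper's: both reduce the statement to ${\rm Supp}(c_1+\cdots+c_k)=A_1\Delta\cdots\Delta A_k$ via associativity of the symmetric difference (the paper phrases the iteration as an explicit induction on $k$), then apply Lemma \ref{oddindex} together with the pairwise disjointness of the sets $A^I$ from Lemma \ref{omega}(i), and finally take cardinalities. Your side remark that linear independence only guarantees the partial sums are nonzero is exactly the role it plays in the paper's argument as well.
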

\begin{proof} As $c_1,\ldots,c_k \in C$ are linearly independent we have $\sum_{i=1}^l c_i\neq 0$ for any $l\in \{1,\ldots,k\}$. Hence ${\rm Supp}(\sum_{i=1}^l c_i)\neq \emptyset$. We proceed the proof by induction on $k$. For $k=2$, we observe that ${\rm Supp}(c_1+c_2) =  A_1 \Delta  A_2$. Then by induction argument and associativity of symmetric difference we get ${\rm Supp}(\sum_{i=1}^{k} c_i)= {A_1\,\Delta\, A_2\,\Delta \ldots\Delta\, A_{k}}$. By Lemma \ref{oddindex} we get that
\begin{eqnarray*}
&&{\rm Supp}(c_1+c_2+\ldots+c_k) = \bigsqcup_{\substack{I\subseteq[k],\\|I| \mbox{ \scriptsize{is odd}}}} A^I
\end{eqnarray*}
Equivalent statement follows easily, as the sets $A^I$'s are pairwise disjoint.
\end{proof}
\section{A characterization of constant weight codes}
\label{Sec:4}
In this section, we give a characterization for the binary linear constant weight codes. As a consequence of this characterization we get the relations between the parameters of a given binary linear constant weight code.

\begin{pro}\label{intersection of Supports} Let $C$ be a constant weight code of weight $w$. If $c_1,c_2,\ldots,c_k$ are linearly independent codewords with the supports $ A_1, A_2,\ldots, A_k$ respectively, then we have
\begin{equation}\label{First_Condition}
\left|\bigcap_{i=1}^k A_i\right|=\frac{w}{2^{k-1}}.
\end{equation}
\end{pro}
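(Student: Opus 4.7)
The plan is to prove the statement by induction on $k$, leveraging Lemma \ref{intersectweight} as the base case and exploiting the linear structure of $C$ to generate auxiliary codewords whose supports are symmetric differences of the $A_i$.

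The base case $k=2$ is exactly Lemma \ref{intersectweight}, which gives $|A_1 \cap A_2| = w/2$. (One can also take $k=1$ trivially, since $|A_1| = w$.) For the inductive step, suppose the formula $|B_1 \cap \cdots \cap B_k| = w/2^{k-1}$ holds whenever $B_1, \ldots, B_k$ are the supports of $k$ linearly independent codewords in $C$. Given $k+1$ linearly independent codewords $c_1, \ldots, c_{k+1}$ with supports $A_1, \ldots, A_{k+1}$, I would consider the tuple $c_1, \ldots, c_{k-1}, c_k + c_{k+1}$, which is also a family of $k$ linearly independent codewords, and whose $k$-th support is $A_k \Delta A_{k+1}$. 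By the inductive hypothesis applied to this tuple,
\[
\left| A_1 \cap \cdots \cap A_{k-1} \cap (A_k \Delta A_{k+1}) \right| = \frac{w}{2^{k-1}}.
\]

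The next step is the set-theoretic identity $B \cap (C \Delta D) = (B \cap C) \Delta (B \cap D)$, which distributes intersection over symmetric difference. Setting $X = A_1 \cap \cdots \cap A_{k-1} \cap A_k$ and $Y = A_1 \cap \cdots \cap A_{k-1} \cap A_{k+1}$, this identity turns the left-hand side above into $|X \Delta Y|$. Applying the inductive hypothesis to the two $k$-tuples $(c_1, \ldots, c_{k-1}, c_k)$ and $(c_1, \ldots, c_{k-1}, c_{k+1})$ yields $|X| = |Y| = w/2^{k-1}$. Using the standard formula $|X \Delta Y| = |X| + |Y| - 2|X \cap Y|$, I can solve
\[
\frac{w}{2^{k-1}} = \frac{w}{2^{k-1}} + \frac{w}{2^{k-1}} - 2|X \cap Y|
\]
to obtain $|X \cap Y| = w/2^{k}$. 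Since $X \cap Y = A_1 \cap \cdots \cap A_{k+1}$, this closes the induction.

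No step looks like a serious obstacle; the only mild subtlety is ensuring the auxiliary tuples are genuinely linearly independent (which they are, because swapping $c_k$ for $c_k + c_{k+1}$ is an invertible row operation) and observing that $c_k + c_{k+1}$ is nonzero with support $A_k \Delta A_{k+1}$ (guaranteed by linear independence). Everything else is a clean combination of the inductive hypothesis, the distributivity identity, and inclusion–exclusion for symmetric differences.
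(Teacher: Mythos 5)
Your proof is correct, and it takes a genuinely different route from the paper's. Both arguments are inductions on $k$ with Lemma \ref{intersectweight} as the base case, but the inductive steps differ substantially. The paper works inside the cell decomposition $\{A^I\}$ of $\bigcup_i A_i$: using Proposition \ref{intersections} it expresses $|A^{i_1\cdots i_l}|$ for every subset in terms of the single unknown $x=\bigl|A^{12\cdots k}\bigr|$ via a cascading computation over subsets of decreasing size, and then pins down $x$ from the constraint $\mathrm{wt}(c_1+\cdots+c_k)=w$ together with the count $2^{k-1}$ of odd-size subsets of $[k]$. You instead pass from the $(k+1)$-tuple to the $k$-tuple $(c_1,\ldots,c_{k-1},c_k+c_{k+1})$, which is legitimately linearly independent with $k$-th support $A_k\Delta A_{k+1}$, apply the inductive hypothesis to three $k$-tuples, and close with the identity $B\cap(C\Delta D)=(B\cap C)\Delta(B\cap D)$ and $|X\Delta Y|=|X|+|Y|-2|X\cap Y|$; every step checks out. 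Your argument is shorter and avoids the $A^I$ machinery entirely. What the paper's heavier computation buys is that it simultaneously establishes $|A^I|=w/2^{k-1}$ for \emph{every} non-empty $I\subseteq[k]$ (the corollary stated immediately after the proof), which is the form of the result actually used in Theorem \ref{converse} and the later sections; that corollary can still be recovered from your version of the statement by inclusion--exclusion over the intersections, so nothing essential is lost, but it would need to be said explicitly if your proof were substituted for the paper's.
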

\begin{proof}
We  use the induction on $k$. For $k=2$, it follows from Lemma \ref{intersectweight}. Suppose it is true for $l<k$, and assume $c_1,c_2,\ldots,c_k$ are linearly independent codewords with supports $ A_1, A_2,\ldots, A_k$ in a constant weight code $C$ of weight $w$. By induction hypothesis we have, $| A_1\cap A_2\cap\ldots\cap A_{k-1}| = \dfrac{w}{2^{k-2}}$. Note that, by Proposition \ref{intersections}, we have $ A_1\cap A_2\cap\ldots\cap A_{k-1} =  A^{12\ldots(k-1)}\sqcup A^{12\ldots k}$. Now, let $x=| A^{12\ldots k}|$. Then, $| A^{12\ldots(k-1)}|= \dfrac{w}{2^{k-2}}-x $. Similarly, for any subset $I=\{i_1,i_2,\ldots, i_{k-1}\}$ of size $k-1$ of $[k]=\{1,2,\ldots,k\}$, again by Proposition \ref{intersections} we can write $\bigcap_{i\in I} A_i = A^I\sqcup A^{12\ldots k}$. So we obtain $| A^{i_1i_2\ldots i_{k-1}}|= \dfrac{m}{2^{k-2}}-x $ for any $\{i_1,i_2,\ldots, i_{k-1}\}\subseteq [k]$ .

Consider the subsets $I$  of $[k]$ which have  $k-2$ elements and start with the simplest one $I=\{1,2,\ldots,k-2\}$. From Proposition \ref{intersections} we  have
$$ A_1\cap A_2\cap\ldots\cap A_{k-2} =  A^{12\ldots(k-2)}\sqcup A^{12\ldots(k-2)(k-1)}\sqcup A^{12\ldots(k-2)k}\sqcup A^{12\ldots k}$$
Comparing the sizes of both sides and using the induction hypothesis we get,
$$\dfrac{w}{2^{k-3}} = | A^{12\ldots(k-2)}| + \dfrac{w}{2^{k-2}}-x+\dfrac{w}{2^{k-2}}-x+x.$$
So we have  $| A^{12\ldots(k-2)}|=x$. In a similar way, for any $I=\{i_1,i_2,\ldots, i_{k-2}\}\subset [k]$, we have $| A^{i_1i_2\ldots i_{k-2}}|=x$. Continuing like this we  have $| A^{i_1i_2\ldots i_{k-3}}|=\dfrac{w}{2^{k-2}}-x$, $| A^{i_1i_2\ldots i_{k-4}}|=x$, $| A^{i_1i_2\ldots i_{k-5}}|=\dfrac{w}{2^{k-2}}-x$, $| A^{i_1i_2\ldots i_{k-6}}|=x$ \ldots.
At the end of this process, for any $i\in[k]$ we have
$$\mid A^i\mid=\begin{cases}
x, & \mbox{if $k$ is odd}\\
\dfrac{w}{2^{k-2}}-x,& \mbox{if $k$ is even}
\end{cases}
$$
Now, consider the support of the sum of $c_1,\ldots,c_k$. By Proposition \ref{symmetricdif}, we have
$${\rm Support}(c_1+c_2+\cdots+c_k)=  A_1\Delta A_2\Delta\ldots\Delta A_k=\bigsqcup_{\substack{I\subseteq[k],\\|I| \mbox{ \scriptsize{is odd}}}} A^I$$
So, since we are working on a constant weight code of weight $w$, and the number of odd size subsets of $[k]$ is $2^{k-1}$, we get
$$w={\rm wt}(c_1+c_2+\ldots+c_t) = \sum_{\substack{I\subseteq[k],\\|I| \mbox{ \scriptsize{is odd}}}}| A^I|=\begin{cases}
2^{k-1}x, & \mbox{if $k$ is odd}\\
2^{k-1}(\dfrac{w}{2^{k-2}}-x),& \mbox{if $k$ is even}
\end{cases}$$
Hence, in any case we have the desired result:
$$| A_1\cap A_2\cap\ldots\cap A_{k}| = | A^{12\ldots k} | =x=\dfrac{w}{2^{k-1}}$$
\end{proof}

In the above proof, we see that if $C$ is a constant weight code of weight $w$ and supports of linearly independent codewords in $C$ satisfy the condition (\ref{First_Condition}), then the sets in the partition defined above have the same number of elements. Therefore, we have the following corollary.
\begin{cor}
Let $C$ be a $k$-dimensional binary constant weight code of weight $w$. If the codewords $c_1,c_2,\ldots,c_k$ with supports $ A_1, A_2,\ldots, A_k$ form a basis for $C$, then $|A^I|=\dfrac{w}{2^{k-1}}$ for any non-empty subset $I\subseteq [k]$.
\end{cor}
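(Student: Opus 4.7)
The plan is to apply Proposition \ref{intersection of Supports} to every non-empty subset of the basis and then invert the resulting system of linear equations by M\"obius inversion on the Boolean lattice $2^{[k]}$.

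First, for any non-empty $I \subseteq [k]$, the subfamily $\{c_i : i \in I\}$ consists of linearly independent codewords, so Proposition \ref{intersection of Supports} applied to this subfamily yields
$$\left|\bigcap_{i\in I} A_i\right| = \frac{w}{2^{|I|-1}}.$$
Combining this with Proposition \ref{intersections}, which expresses $\bigcap_{i\in I} A_i$ as the disjoint union $\bigsqcup_{J \supseteq I} A^J$, I obtain the equation $\sum_{J\supseteq I} |A^J| = w/2^{|I|-1}$ for every non-empty $I \subseteq [k]$.

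Second, I solve for $|A^I|$ by M\"obius inversion on the subset lattice, giving
$$|A^I| = \sum_{J \supseteq I} (-1)^{|J|-|I|}\, \frac{w}{2^{|J|-1}}.$$
Setting $l = |I|$ and reindexing by $m = |J|-l$, this collapses to $(w/2^{l-1})\sum_{m=0}^{k-l}\binom{k-l}{m}(-1/2)^m$, which by the binomial theorem equals $(w/2^{l-1})(1/2)^{k-l} = w/2^{k-1}$, independent of $l$.

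There is no serious obstacle here: the corollary is essentially already embedded in the inductive computation of Proposition \ref{intersection of Supports}, where the alternating cardinalities $x$ and $w/2^{k-2} - x$ both equal $w/2^{k-1}$ once $x$ is determined. The M\"obius inversion step simply uniformizes this across all sizes of $I$ and bypasses rereading the inductive bookkeeping.
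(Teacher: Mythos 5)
Your proof is correct, but it takes a genuinely different route from the paper's. The paper gives no standalone argument for this corollary: it simply reads the result off from the interior of the inductive proof of Proposition \ref{intersection of Supports}, where the cardinalities $|A^{i_1i_2\ldots i_j}|$ were computed to alternate between $x$ and $\frac{w}{2^{k-2}}-x$, and observes that once $x=\frac{w}{2^{k-1}}$ is established these two values coincide, so every block of the partition has the same size. You instead use only the \emph{statement} of Proposition \ref{intersection of Supports}, applied to each subfamily $\{c_i : i\in I\}$ (legitimate, since any subfamily of a linearly independent set is linearly independent, and the case $|I|=1$ just says $|A_i|=w$), combine it with the decomposition $\bigcap_{i\in I}A_i=\bigsqcup_{J\supseteq I}A^J$ from Proposition \ref{intersections}, and invert the resulting triangular system by M\"obius inversion on the Boolean lattice; your binomial-theorem collapse to $\frac{w}{2^{l-1}}\left(\frac{1}{2}\right)^{k-l}=\frac{w}{2^{k-1}}$ is correct. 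What your approach buys is a self-contained, uniform derivation that does not depend on rereading the bookkeeping of the earlier induction (which is written somewhat informally, with a ``continuing like this'' step), and it makes transparent why the answer is independent of $|I|$. What the paper's approach buys is economy: the corollary costs nothing extra once Proposition \ref{intersection of Supports} has been proved.
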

\begin{cor}\label{atleastweight} Let $C$ be a $k$-dimensional binary constant weight code. Then weight of $C$ is a multiple of $2^{k-1}$.
\end{cor}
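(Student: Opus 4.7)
The plan is to apply Proposition \ref{intersection of Supports} directly to a basis of $C$. Since $\dim C = k$, I can choose linearly independent codewords $c_1, c_2, \ldots, c_k$ in $C$ that form a basis; let $A_1, A_2, \ldots, A_k$ denote their supports. By Proposition \ref{intersection of Supports}, the cardinality of their common intersection satisfies
$$\left|\bigcap_{i=1}^k A_i\right| = \frac{w}{2^{k-1}}.$$

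The left-hand side is a non-negative integer, so $w/2^{k-1}$ must itself be a non-negative integer, which is precisely the statement that $2^{k-1}$ divides $w$. Since $C$ is a constant weight code of dimension at least $2$, the weight $w$ is positive, so in fact $w/2^{k-1}$ is a positive integer and the conclusion follows.

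There is essentially no obstacle here: the work was already absorbed into Proposition \ref{intersection of Supports}, and this corollary is just an integrality observation applied to a basis. One minor sanity check I would perform is noting that the choice of basis is immaterial, since the divisibility conclusion depends only on $w$ and $k$, not on which basis is chosen.
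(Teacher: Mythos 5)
Your proof is correct and is essentially identical to the paper's: both apply Proposition \ref{intersection of Supports} to a basis and conclude from the integrality of $\left|\bigcap_{i=1}^k A_i\right| = w/2^{k-1}$ that $2^{k-1}$ divides $w$. No further comment is needed.
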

\begin{proof}
It follows from the fact that the number \(\left|\bigcap_{i=1}^k A_i\right|=\frac{w}{2^{k-1}} \) is a positive integer.
\end{proof}
Now we can easily conclude the following result.
\begin{cor}\label{dim3} Let $k\geq 3$ be an integer. There is no $k$-dimensional constant weight code of weight $2$.
\end{cor}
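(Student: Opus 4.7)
The plan is to deduce this immediately from Corollary \ref{atleastweight}, which is the hard work already done. That corollary tells us the weight $w$ of any $k$-dimensional binary constant weight code must be a positive integer multiple of $2^{k-1}$.

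First, I would observe that for $k \geq 3$ we have $2^{k-1} \geq 4$, so the smallest admissible weight is at least $4$. In particular $w = 2$ is impossible, which is exactly the claim.

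If one prefers to argue from scratch without invoking Corollary \ref{atleastweight} explicitly, the same conclusion follows directly from Proposition \ref{intersection of Supports}: choosing any basis $c_1,\ldots,c_k$ of $C$ with supports $A_1,\ldots,A_k$, the quantity $\left|\bigcap_{i=1}^k A_i\right| = \frac{w}{2^{k-1}}$ must be a non-negative integer; plugging in $w = 2$ and $k \geq 3$ yields $\tfrac{1}{2^{k-2}}$, which is not an integer, a contradiction. There is no real obstacle here — the only thing to check is the arithmetic inequality $2^{k-1} > 2$ for $k \geq 3$, and the substantive content has all been established in the preceding results.
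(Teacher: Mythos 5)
Your proof is correct and follows exactly the paper's intended route: the corollary is stated immediately after Corollary \ref{atleastweight} precisely so that one deduces it from the divisibility condition $2^{k-1}\mid w$, which forces $w\geq 4$ when $k\geq 3$. Your alternative derivation directly from Proposition \ref{intersection of Supports} is the same arithmetic in disguise, so nothing further is needed.
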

As a corrollary we get the explicit relations between the parameters of a given binary linear constant weight codes which are also obtained in \cite{Bonisoli} by using Plotkin bound. Our calculations do not use  Plotkin bound, they just follows from the intersection of the supports of the codewords (see Proposition \ref{intersection of Supports}).
\begin{cor}\label{atleastlength}
Let $C$ be a binary $k$-dimensional constant weight code such that weight of $C$ is $2^{k-1}m$ for some positive integer $m$. Then the length of $C$ is at least $(2^k-1)m.$
\end{cor}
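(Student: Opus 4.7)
The plan is to deduce the length bound directly from the preceding corollary, which states that for a basis $c_1,\ldots,c_k$ of $C$ with supports $A_1,\ldots,A_k$, every set $A^I$ (for non-empty $I\subseteq[k]$) has the same cardinality $\tfrac{w}{2^{k-1}}$. Substituting the assumed weight $w=2^{k-1}m$ gives $|A^I|=m$ for \emph{every} non-empty $I\subseteq[k]$; in particular, every such $A^I$ is non-empty.

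Next I would invoke Lemma \ref{omega}(iii), which says that the non-empty sets $A^I$ form a partition of $\bigcup_{i=1}^{k}A_i$. Since all $2^{k}-1$ of the sets $A^I$ are non-empty and pairwise disjoint, each of size $m$, we obtain
\[
\Bigl|\bigcup_{i=1}^{k}A_i\Bigr|=\sum_{\emptyset\neq I\subseteq[k]}|A^I|=(2^{k}-1)m.
\]

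Finally, since $C\subseteq\mathbb{F}_2^{n}$ where $n$ is the length of $C$, each support $A_i$ is a subset of $\{1,2,\ldots,n\}$, so $\bigcup_{i=1}^{k}A_i\subseteq\{1,\ldots,n\}$. Comparing cardinalities yields $n\geq(2^{k}-1)m$, as required.

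There is essentially no obstacle here: the whole content has been front-loaded into the preceding corollary (uniform size $m$ for every $A^I$) and Lemma \ref{omega}(iii) (partition of the union). The only thing to check carefully is that every $A^I$ is indeed non-empty, which is immediate from $m\geq 1$. So the corollary is a short packaging step rather than a new argument.
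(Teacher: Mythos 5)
Your proof is correct and follows essentially the same route as the paper's: both deduce $|A^I|=m$ for all $2^k-1$ non-empty subsets $I\subseteq[k]$ (you via the corollary following Proposition \ref{intersection of Supports}, the paper by citing that proposition's proof directly) and then count the disjoint sets $A^I$ inside $\{1,\ldots,n\}$. Your explicit appeal to Lemma \ref{omega}(iii) and the remark that $m\geq 1$ guarantees non-emptiness only make the same argument slightly more careful.
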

\begin{proof}
From the proof of the Proposition \ref{intersection of Supports} we know that if the weight of a $k$-dimensional constant weight  code is $w=2^{k-1}m$ for some positive integer $m$, then for any non-empty subset $I\subseteq [k]$, $| A^I|=\dfrac{w}{2^{k-1}}=m$. Since there are exactly $2^k-1$ many such subsets $I$'s and the corresponding sets $ A^I$'s are disjoint subsets of the set $\{1,2,\ldots,n\}$, where $n$ is the length of codewords, the minimum length $n$ is at least $(2^k-1)m$.
\end{proof}

Conversely, we have the following theorem.
\begin{thm}\label{converse}
Suppose $c_1,c_2,\ldots,c_k$ are linearly independent codewords of the same weight $w$ with supports $A_1,A_2,\ldots,A_k$. If every set in the collection $\{A^I \ | \ I\subseteq [k], \ |I|>0\}$ has $\dfrac{w}{2^{k-1}}$ elements , then the codewords $c_1,c_2,\ldots,c_k$ generate  a $k$-dimensional constant weight  code of weight $w$.
\end{thm}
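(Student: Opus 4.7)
The plan is to show directly that every nonzero codeword of $C = \langle c_1, \ldots, c_k\rangle$ has weight exactly $w$; linear independence then automatically gives $\dim C = k$. Any nonzero codeword can be written uniquely as $\sum_{j \in J} c_j$ for some nonempty $J \subseteq [k]$, and Proposition \ref{Symmetric_Difference} gives an explicit formula for its support in terms of the sets $A^I$.

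First I would apply Proposition \ref{Symmetric_Difference} to conclude that
\[
{\rm Supp}\Bigl(\sum_{j \in J} c_j\Bigr) = \Delta^J \mathcal{A} = \bigsqcup_{\substack{I \subseteq [k],\\ |J \cap I| \text{ is odd}}} A^I,
\]
where $\mathcal{A} = \{A_1, \ldots, A_k\}$. Since these pieces are pairwise disjoint and each has cardinality $\frac{w}{2^{k-1}}$ by hypothesis, the weight of this codeword equals
\[
\frac{w}{2^{k-1}} \cdot \bigl|\{I \subseteq [k] : |J \cap I| \text{ is odd}\}\bigr|.
\]
Note that every $I$ appearing here is automatically nonempty because $|J \cap I|$ odd forces $J \cap I \neq \emptyset$, so the hypothesis on $|A^I|$ applies to every term.

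The main step (and the only real computation) is counting those $I$: I would set $j = |J| \geq 1$ and decompose $I = (I \cap J) \sqcup (I \cap J^c)$. The first factor can be any subset of $J$ of odd size, giving $2^{j-1}$ choices, while the second factor is an arbitrary subset of $J^c$, giving $2^{k-j}$ choices. The product is $2^{k-1}$, independent of $j$ and of the particular $J$ — this is the key identity that makes the argument work.

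Combining the two displays shows that ${\rm wt}\bigl(\sum_{j \in J} c_j\bigr) = \frac{w}{2^{k-1}} \cdot 2^{k-1} = w$ for every nonempty $J \subseteq [k]$. Since $c_1, \ldots, c_k$ are linearly independent, these are precisely the $2^k - 1$ nonzero codewords of $C$, so $C$ is a $k$-dimensional constant weight code of weight $w$. I do not anticipate any genuine obstacle; the only subtle point is remembering that nonemptiness of $J$ is what guarantees nonemptiness of every $I$ contributing to the support, so that the assumption $|A^I| = \frac{w}{2^{k-1}}$ can be invoked uniformly.
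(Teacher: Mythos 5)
Your proposal is correct and follows essentially the same route as the paper: both reduce the claim to computing ${\rm wt}(\sum_{j\in J}c_j)$ via Proposition \ref{Symmetric_Difference} and the count of subsets $I\subseteq[k]$ with $|J\cap I|$ odd being $2^{k-1}$. The only difference is that you spell out that count (odd subsets of $J$ times arbitrary subsets of $J^c$) and the nonemptiness of the contributing $I$'s, both of which the paper asserts without elaboration.
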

\begin{proof}
Suppose $c_1,c_2,\ldots,c_k$ are linearly independent codewords of the same weight $w$ with supports $A_1,A_2,\ldots,A_k$ satisfying the condition $|A^I|=\dfrac{w}{2^{k-1}}$ for all non-empty subset $I$ of $[k]$. Now, in order to show that the linearly independent codewords $c_1,c_2,\ldots,c_k$ generate a $k$-dimensional constant weight  code, it is enough to show that the weight of the sum $c_{i_1}+c_{i_2}+\ldots+c_{i_l}$ is $w$, for any subset $J=\{i_1,i_2,\ldots, i_{l}\}\subseteq [k]$. Since the support of the sum $c_{i_1}+c_{i_2}+\ldots +c_{i_l}$ is $A_{i_1}\Delta A_{i_2}\Delta \cdots\Delta A_{i_l}$, by Proposition \ref{Symmetric_Difference} we have
$$\mbox{Supp}(c_{i_1}+c_{i_2}+\ldots+c_{i_l})=\Delta^J\mathcal{A} = \bigsqcup_{\substack{I\subseteq[k],\\|J\cap I| \scriptsize{\mbox{ is odd}}}} A^I,$$
which implies that
$$\mbox{wt}(c_{i_1}+c_{i_2}+\ldots+c_{i_l})= \sum_{\substack{I\subseteq[k],\\|J\cap I| \scriptsize{\mbox{ is odd}}}} |A^I|.$$
By assumption we have $|A^I|=\dfrac{w}{2^{k-1}}$ for all $\emptyset \neq I\subseteq [k]$. Since the number of subsets $I\subseteq[k]$ satisfying the condition ``$|J\cap I|$ is odd'' is $2^{k-1}$, we obtain the required result
$$\mbox{wt}(c_{i_1}+c_{i_2}+\ldots+c_{i_l})= \sum_{\substack{I\subseteq[k],\\|J\cap I|=\scriptsize{\mbox{odd}}}} |A^I|=2^{k-1}\dfrac{w}{2^{k-1}}=w.$$
\end{proof}
\begin{rem}
Note that, the Proposition \ref{intersection of Supports} and the Theorem \ref{converse} together give us a  correspondence between $k$-dimensional constant weight codes of weight $w$ and length $n$, and the partitions $ \{ A^I \ : \ I\subseteq [k], |I|>0 \}$ of subsets of $ \{ 1,2,\ldots, n\}$ with size $(2^k-1)\dfrac{w}{2^{k-1}}$ satisfying the condition $|A^I|=\dfrac{w}{2^{k-1}}$ for all non-empty $I\subseteq [k]$.
\end{rem}
\section{A construction of constant weight codes up to equivalence}
\label{Sec:5}
In the next section, we consider an algebraic approach for the binary linear constant weight codes in terms of their automorphism groups. To determine the permutation automorphism groups of the codes, we need a list of explicit basis elements.
\begin{thm}\label{construction} For any positive integer $m$, the codewords of length $(2^k-1)m$ defined as
\begin{eqnarray*}
&&c_1=(\overbrace{1,1,\ldots,1}^{2^{k-1}m},\overbrace{0,0,\ldots,0}^{(2^{k-1}-1)m}), \quad c_2=(\overbrace{1,\ldots,1}^{2^{k-2}m},\overbrace{0,\ldots,0}^{2^{k-2}m},\overbrace{1,\ldots,1}^{2^{k-2}m},\overbrace{0,\ldots,0}^{(2^{k-2}-1)m}), \\
&& c_3=(\overbrace{1,\ldots,1}^{2^{k-3}m},\overbrace{0,\ldots,0}^{2^{k-3}m},\overbrace{1,\ldots,1}^{2^{k-3}m},\overbrace{0,\ldots,0}^{2^{k-3}m},\overbrace{1,\ldots,1}^{2^{k-3}m},\overbrace{0,\ldots,0}^{2^{k-3}m},\overbrace{1,\ldots,1}^{2^{k-3}m},\overbrace{0,\ldots,0}^{(2^{k-3}-1)m}),\\
&& \vdots\\
&& c_{k-1} = (\overbrace{1,\ldots,1}^{2m},\overbrace{0,\ldots,0}^{2m},\ldots,\overbrace{1,\ldots,1}^{2m},\overbrace{0,\ldots,0}^{2m},\overbrace{1,\ldots,1}^{2m},\overbrace{0,\ldots,0}^{m}),\\
&& c_k = (\overbrace{1,\ldots,1}^{m},\overbrace{0,\ldots,0}^{m},\overbrace{1,\ldots,1}^{m},\overbrace{0,\ldots,0}^{m},\ldots,\overbrace{1,\ldots,1}^{m},\overbrace{0,\ldots,0}^{m},\overbrace{1,\ldots,1}^{m}),
\end{eqnarray*}
generate a $k$-dimensional constant weight code of weight $2^{k-1}m$.
\end{thm}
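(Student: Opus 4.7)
The plan is to apply Theorem \ref{converse}. For this I would verify three things about $c_1,\ldots,c_k$: (i) each $c_i$ has weight $2^{k-1}m$, (ii) the $c_i$ are linearly independent, and (iii) for every non-empty $I\subseteq [k]$, $|A^I|=m=w/2^{k-1}$, where $A_i={\rm Supp}(c_i)$. Part (i) is a direct block count from the displayed definition: each $c_i$ contains exactly $2^{k-1}m$ ones.

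The core observation that yields (ii) and (iii) simultaneously is to view $c_1,\ldots,c_k$ as the rows of a $k\times(2^k-1)m$ matrix $G$ and to prove the \textbf{key claim}: every non-zero vector $v\in\mathbb{F}_2^k$ occurs as a column of $G$ exactly $m$ times. Granting this, for any non-empty $I\subseteq [k]$ the characteristic vector $\chi_I\in\mathbb{F}_2^k$ is non-zero, and the set of column indices $j$ at which the $j$-th column of $G$ equals $\chi_I$ coincides, by the very definition, with $A^I$; hence $|A^I|=m$, which is (iii). Moreover, each standard basis vector $e_i$ of $\mathbb{F}_2^k$ appears as a column (take $I=\{i\}$), so $G$ has row rank $k$, giving (ii).

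I would prove the key claim by induction on $k$. The base case $k=2$ is immediate from the three columns (each $m$-fold replicated) of the pair $c_1=(1,1,0)$, $c_2=(1,0,1)$. For the inductive step, split the columns of $G$ at position $2^{k-1}m$. On the first $2^{k-1}m$ positions $c_1$ is identically $1$, and an inspection of block sizes shows that $c_2,\ldots,c_k$ restricted to these positions equals the rank-$(k-1)$ instance of the construction (of length $(2^{k-1}-1)m$), followed by an all-zero tail of length $m$; by induction this realizes every vector of $\mathbb{F}_2^{k-1}$ exactly $m$ times, accounting for each vector of $\mathbb{F}_2^k$ with leading coordinate $1$ appearing $m$ times. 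On the last $(2^{k-1}-1)m$ positions $c_1$ is identically $0$ and $c_2,\ldots,c_k$ restricts exactly to the rank-$(k-1)$ construction, which by induction realizes each non-zero vector of $\mathbb{F}_2^{k-1}$ exactly $m$ times, accounting for each non-zero vector of $\mathbb{F}_2^k$ with leading coordinate $0$. The principal obstacle is this block-alignment bookkeeping: one must record that the block size $2^{k-i}m$ in the rank-$k$ construction matches the block size $2^{(k-1)-(i-1)}m$ in the rank-$(k-1)$ construction under the reindexing $i\mapsto i-1$, which makes the restriction identification precise. Once this is in place, Theorem \ref{converse} completes the proof.
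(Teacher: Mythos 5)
Your proposal is correct and follows the same route as the paper: reduce everything to Theorem \ref{converse} by checking linear independence and that $|A^I|=m$ for all non-empty $I\subseteq[k]$. The paper's own proof simply asserts that the codewords "by construction satisfy the criteria" of that theorem, so your column-multiplicity claim (every non-zero vector of $\mathbb{F}_2^k$ occurs exactly $m$ times as a column of the generator matrix, proved by induction on $k$) is a careful and correct filling-in of the verification the paper omits, and it yields linear independence for free, whereas the paper's parenthetical justification of independence is itself only a sketch.
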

\begin{proof}
Clearly, the described codewords are linearly independent (because, $c_1+\ldots+c_t\neq0$). By construction they satisfy the criteria of the Theorem \ref{converse}. So they generate a $k$-dimensional constant weight  code of weight $2^{k-1}m$:
\end{proof}
\begin{rem}\label{explanation}
Note that, by adding the same number of zeroes at the end of each codeword described in  the previous theorem, one can obtain a constant weight  code of arbitrary length $n>(2^k-1)m$. In \cite{Bonisoli},  Bonisoli proves that two $q$-ary linear constant weight codes (linear equidistant codes) with the same parameters are equivalent according to the definition of \cite{MacWilliams}. In the binary case this equivalence is the same as permutation equivalence. For the proof he starts with a generator matrix of a linear equidistant code and uses an induction argument on the dimension of the code to show that the generator matrix is equivalent to a block matrix formed by Hamming matrices which generates a linear equidistant code.  For  Bonisoli's result on permutation equivalence, we give a simpler proof by using the characterization in Theorem \ref{converse} and  by  writing an explicit permutation of the supports of the codeswords which gives the permutation equivalence. As a conclusion,  Theorem \ref{construction} gives all $k$-dimensional constant weight codes of weight $2^{k-1}m$, ($m\in \mathbb{N}$), and length $n\ge (2^k-1)m$ up to permutation equivalence.
\end{rem}
\begin{thm}[\cite{Bonisoli}]\label{Perm_Equiv}
If $C_1$ and $C_2$ are constant weight codes of the same dimension, weight and length, then they are permutation equivalent.
\end{thm}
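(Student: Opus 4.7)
The plan is to leverage the characterization from Proposition \ref{intersection of Supports} and Theorem \ref{converse} to build an explicit permutation $\sigma\in S_n$ block by block on the partition pieces $A^I$. By Corollary \ref{atleastweight}, we may write $w=2^{k-1}m$ for some positive integer $m$. Fix a basis $c_1,\ldots,c_k$ of $C_1$ with supports $A_1,\ldots,A_k$, and a basis $c_1',\ldots,c_k'$ of $C_2$ with supports $A_1',\ldots,A_k'$. The characterization gives, for each nonempty $I\subseteq[k]$, equalities
\[
|A^I| \;=\; |A'^{\,I}| \;=\; \frac{w}{2^{k-1}} \;=\; m,
\]
and by Lemma \ref{omega}(iii) together with Corollary \ref{atleastlength}, both $\bigcup_i A_i$ and $\bigcup_i A_i'$ have cardinality $(2^k-1)m$, so their complements in $[n]=\{1,\ldots,n\}$ share the common size $n-(2^k-1)m$.

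Next I would assemble $\sigma$ piecewise. For each nonempty $I\subseteq[k]$ choose an arbitrary bijection $\sigma_I\colon A^I\to A'^{\,I}$, which exists by the cardinality equality above. Choose also an arbitrary bijection $\sigma_0$ from $[n]\setminus\bigcup_i A_i$ to $[n]\setminus\bigcup_i A_i'$. Because the $A^I$'s partition $\bigcup_i A_i$ and the $A'^{\,I}$'s partition $\bigcup_i A_i'$ (Lemma \ref{omega}(iii)), the union of all these bijections is a well-defined permutation $\sigma\in S_n$.

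The verification that $\sigma(C_1)=C_2$ then rests on the identity $A_i=\bigsqcup_{i\in I,\,I\subseteq[k]}A^I$ from Lemma \ref{omega}(ii). By construction $\sigma$ sends each $A^I$ onto $A'^{\,I}$, so
\[
\sigma(A_i) \;=\; \bigsqcup_{i\in I,\,I\subseteq[k]}\sigma(A^I) \;=\; \bigsqcup_{i\in I,\,I\subseteq[k]}A'^{\,I} \;=\; A_i'.
\]
Since $\sigma$ permutes coordinates, this means $\sigma(c_i)=c_i'$ for every $i$. The permutation action on $\mathbb{F}_2^n$ is linear, so $\sigma$ sends the basis of $C_1$ to a basis of $C_2$, whence $\sigma(C_1)=C_2$.

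There is no serious obstacle; the content of the argument is entirely concentrated in the counting identity $|A^I|=m$, which is already supplied by Proposition \ref{intersection of Supports} and its corollary. The mild bookkeeping point to check is that the complements $[n]\setminus\bigcup_i A_i$ and $[n]\setminus\bigcup_i A_i'$ really do have the same size (so that $\sigma_0$ exists), but this is immediate because both support-unions have size $(2^k-1)m$ while the ambient length $n$ is fixed.
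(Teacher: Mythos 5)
Your proposal is correct and follows essentially the same route as the paper: both exploit the fact that every piece $A^I$ of the partition has size $m=\frac{w}{2^{k-1}}$ and build $\sigma$ blockwise so that it carries each $A^I$ onto the corresponding piece for $C_2$, whence $\sigma$ maps the chosen basis of $C_1$ to that of $C_2$. Your way of assembling $\sigma$ as a disjoint union of bijections (including one on the complements of the support unions) is in fact a slightly more careful gluing than the paper's product of transpositions, but the underlying idea is identical.
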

\begin{proof}
For each $k$-dimensional constant weight code of weight $w$ and length $n$, fix a basis $c_1,c_2,\ldots,c_k$, and denote their supports by $A_1,A_2,\ldots,A_k$, respectively. By Theorem \ref{converse}, the constant weight code $C$ is completely characterized  by the partition $\{A^I \ : \ \emptyset\neq I \subseteq [k]\}$ of $\cup_{i=1}^kA_i$. Now, suppose that $C_1$ and $C_2$ are two $k$-dimensional constant weight codes of the same weight $w=2^{k-1}m$, and of the same length $n\geq (2^k-1)m$. Assume $\{A^I \ : \ \emptyset\neq I \subseteq [k]\}$ and $\{B^I \ : \ \emptyset\neq I \subseteq [k]\}$ are the corresponding partitions. Since $A^I$ and $B^I$ are subsets of $\{1,2,\ldots,n\}$ and they have the same number of elements, we can find a permutation $\sigma_I\in S_n$ which maps $A^I$ to $B^I$ and fixes the complement of $A^I\cup B^I$. Indeed, if $A^I=\{r_1,r_2,\ldots,r_m\}$ and $B^I=\{s_1,s_2,\ldots,s_m\}$, where $m=\dfrac{w}{2^{k-1}}$, then the permutation $\sigma_I = (r_1s_1)(r_2s_2)\cdots(r_ms_m)$ is the required permutation. Then the permutation $\sigma = \prod_{\emptyset \neq I\subseteq[k]}\sigma_I$ maps the partition $\{A^I \ : \ \emptyset\neq I \subseteq [k]\}$ to the partition $\{B^I \ : \ \emptyset\neq I \subseteq [k]\}$. Hence, it maps $C_1$ to $C_2$.
 \end{proof}
\section{Permutation automorphism groups of constant weight codes} \label{Sec:6}
Let $S_n$ be the symmetric group on $n$ letters. There is an action of $S_n$ on the vector space $\mathbb{F}_2^n$ given as follows: Let $\sigma\in S_n$. Then for any $v=(v_1,v_2,\ldots,v_n)$, $\sigma v=(v_{\sigma^{-1}(1)}, v_{\sigma^{-1}(2)},\ldots,v_{\sigma^{-1}(n)})$.
The stabilizer of $C$ under this action is $\PAut(C)=\{\sigma\in S_n \ :\  \sigma(C)=C\}$. It is clear that $\PAut(C)$ is a subgroup of $S_n$. It is called the permutation automorphism group of $C$. If two codes are permutation equivalent, corresponding permutation automorphism groups are isomorphic. In fact, whenever $\sigma C_1=C_2$, the corresponding permutation automorphism groups are conjugate, i.e., we have $\PAut(C_2)=\PAut(C_1)^{\sigma}$.
In this section, we consider the permutation automorphism group of a constant weight code.

Consider the constant weight code %
$$C = \{(0,0,\ldots,0),(1,1,0,\ldots,0),(1,0,1,\ldots,0),(0,1,1,\ldots,0)\}.$$
We have proved that any $2$-dimensional constant weight codes of length $n$ and of weight $2$ is permutation equivalent to $C$. So, for a $2$-dimensional constant weight code of weight $2$ and length $n$, it is enough to consider permutation automorphism group of the code $C$ given above. By construction, it is clear that $\PAut(C)\cong H\times K$, where $H$ and $K$ are the symmetric groups on $3$-letters $\{1,2,3\}$ and $(n-3)$-letters $\{4,5,\ldots,n\}$, respectively. So the order of $\PAut(C)$ is $3!(n-3)!$. This proves the following proposition:
\begin{pro}\label{weight2} Let $C$ be a $2$-dimensional constant weight code of weight $2$ and length $n\ge 3$. Then $\PAut(C)\cong S_3\times S_{n-3}$.
\end{pro}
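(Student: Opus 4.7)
The plan is to exploit Theorem \ref{Perm_Equiv} to reduce the problem to a single canonical representative. Since any two $2$-dimensional constant weight codes of weight $2$ and length $n$ are permutation equivalent, their permutation automorphism groups are conjugate subgroups of $S_n$, and in particular isomorphic (as noted at the start of the section). Thus it suffices to compute $\PAut(C)$ for the specific code $C = \{\mathbf{0}, (1,1,0,\ldots,0),(1,0,1,\ldots,0),(0,1,1,\ldots,0)\}$, whose three nonzero codewords have supports equal to the three $2$-subsets $\{1,2\}, \{1,3\}, \{2,3\}$ of $\{1,2,3\}$, and which is identically zero on coordinates $4, \ldots, n$.

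Next, I would observe that the union of the supports of all codewords of $C$ is exactly $\{1,2,3\}$, and that any $\sigma \in \PAut(C)$ must preserve this union setwise: since $\sigma$ acts on $C$ by permuting its codewords, it permutes their supports, and therefore stabilizes the union. Consequently every $\sigma \in \PAut(C)$ splits as a product $\sigma = \tau \rho$ where $\tau$ permutes $\{1,2,3\}$ and $\rho$ permutes $\{4,\ldots,n\}$, yielding a natural injective homomorphism $\PAut(C) \hookrightarrow S_{\{1,2,3\}} \times S_{\{4,\ldots,n\}}$.

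The remaining step, which is the crux of the argument, is to verify surjectivity, i.e.\ that every pair $(\tau,\rho)$ yields a genuine automorphism of $C$. For $\rho \in S_{\{4,\ldots,n\}}$ this is immediate: coordinates $4, \ldots, n$ are identically zero on every codeword, so $\rho$ fixes $C$ pointwise. For $\tau \in S_{\{1,2,3\}}$, the action of $\tau$ on the three nonzero codewords corresponds to the induced action of $\tau$ on the three $2$-subsets of $\{1,2,3\}$, which merely permutes $\{1,2\},\{1,3\},\{2,3\}$ among themselves; hence $\tau C = C$. Because the two factors act on disjoint coordinate sets, they commute elementwise, so the image is the full direct product $S_3 \times S_{n-3}$, giving $\PAut(C) \cong S_3 \times S_{n-3}$.

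I do not anticipate a substantive obstacle here; the only point requiring care is the \emph{setwise} invariance of $\{1,2,3\}$ under $\PAut(C)$, which is what forces the decomposition into a direct product rather than some nontrivial extension. Everything else is a transparent coordinate computation once Theorem \ref{Perm_Equiv} is invoked.
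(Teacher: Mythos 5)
Your proposal is correct and follows essentially the same route as the paper: reduce via Theorem \ref{Perm_Equiv} to the canonical code supported on $\{1,2,3\}$, then identify $\PAut(C)$ with $S_{\{1,2,3\}}\times S_{\{4,\ldots,n\}}$. The paper dismisses this last step with ``by construction, it is clear,'' whereas you spell out the setwise invariance of the union of supports and the surjectivity check, so your write-up is if anything more complete.
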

To find the isomorphism type of the permutation automorphism groups of constant weight codes of larger weight, we calculate the total number of distinct $k$-dimensional constant weight codes. Moreover, we find a formula for the order of permutation automorphism group of a given constant weight code.
\subsection{The order of permutation automorphism groups}
 We prove that the weight of a $k$-dimensional constant weight code must be of the form $2^{k-1}m$ for some positive integer $m$ and its length $n$ must satisfy the condition $n\ge (2^k-1)m$. We also prove that there is a  correspondence between $k$-dimensional constant weight codes with weight $2^{k-1}m$ and length $n\ge (2^k-1)m$, and the families $\{A^I\subseteq [n]\ : \ I\subseteq [k], |I|>0 \mbox{ and } |A^I|=m \}$.

The number of such families can be obtained by the following formula
\begin{equation}
{n \choose (2^k-1)m}\prod_{i=1}^{2^k-1}{(2^k-i)m \choose m}
\end{equation}
Since the order is not important we have to divide this number by $k!$. On the other hand, the number of unordered bases of a $k$-dimensional vector space over $\mathbb{F}_2$ is $\dfrac{\prod_{i=0}^{k-1}(2^k-2^i)}{k!}$. Hence, the total number of distinct $k$-dimensional constant weight  codes with  weight $2^{k-1}m$ and length $n\ge (2^k-1)m$ is
\begin{equation} \label{number_of_codes1}
\frac{{n \choose (2^k-1)m}\displaystyle\prod_{i=1}^{2^k-1}{(2^k-i)m \choose m}}{\displaystyle\prod_{i=0}^{k-1}(2^k-2^i)}
\end{equation}
In terms of multinomial numbers the above formula can be written as
\begin{equation}\label{number_of_codes2}
\frac{{n \choose (2^k-1)m}{(2^k-1)m \choose m,m,\ldots,m}}{\displaystyle\prod_{i=0}^{k-1}(2^k-2^i)},
\end{equation}
where
$${(2^k-1)m \choose \underbrace{m,m,\ldots,m}_{2^k-1}}=\frac{((2^k-1)m)!}{m!m!\ldots m!}=\frac{((2^k-1)m)!}{(m!)^{2^k-1}}$$

The action of $S_n$ on $\mathbb{F}_2^n$ is linear. For two permutations $\sigma_1,\sigma_2\in S_n$, if $\sigma_1^{-1}\sigma_2\in \PAut(C)$, then $\sigma_1^{-1}\sigma_2(C)=C$. So $\sigma_1(C)=\sigma_2(C)$ for any two permutations in the same left cosets of $\PAut(C)$. Therefore, if $\sigma_1, \sigma_2$ are representatives of two distinct left cosets of $\PAut(C)$, then $\sigma_1(C)\neq\sigma_2(C)$. Hence, the number of distinct constant weight codes of the same parameters as of $C$ is the same as the number of left cosets of $\PAut(C)$ in $S_n$. Thus, by the formula (\ref{number_of_codes2}) the order of $\PAut(C)$ can be given
\begin{equation}\label{order_of_PAut 1}
|\PAut(C)|=  \frac{n!\displaystyle\prod_{i=0}^{k-1}(2^k-2^i)}{{n \choose (2^k-1)m}{(2^k-1)m \choose m,m,\ldots,m}}.
\end{equation}
Simplifying the right hand side, we obtain the following result.

\begin{thm}\label{order_of_PAut 2} Let $C$ be a $k$-dimensional constant weight code of weight $2^{k-1}m$ and length $n$. Then we have

 \begin{equation}
|\PAut(C)|=  (n-(2^k-1)m)!(m!)^{2^k-1}\prod_{i=0}^{k-1}(2^k-2^i) .
\end{equation}

\end{thm}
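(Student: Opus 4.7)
The plan is to derive the formula as a direct algebraic simplification of equation (\ref{order_of_PAut 1}). The substantive groundwork has been laid in the discussion preceding the theorem: the characterization of $k$-dimensional constant weight codes of weight $2^{k-1}m$ and length $n$ via partitions $\{A^I : \emptyset\neq I\subseteq [k]\}$ gives the count in (\ref{number_of_codes2}); Theorem \ref{Perm_Equiv} guarantees that $S_n$ acts transitively on the set of such codes; and the orbit-stabilizer identity then produces (\ref{order_of_PAut 1}). What remains is purely arithmetic.

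Concretely, I would expand the two coefficients in the denominator of (\ref{order_of_PAut 1}) as
\[
\binom{n}{(2^k-1)m}=\frac{n!}{((2^k-1)m)!\,(n-(2^k-1)m)!}, \qquad \binom{(2^k-1)m}{m,m,\ldots,m}=\frac{((2^k-1)m)!}{(m!)^{2^k-1}}.
\]
Multiplying these two expressions, the factor $((2^k-1)m)!$ cancels, leaving $\dfrac{n!}{(n-(2^k-1)m)!\,(m!)^{2^k-1}}$. Substituting this into (\ref{order_of_PAut 1}) and cancelling the $n!$ in the numerator of (\ref{order_of_PAut 1}) against the newly produced $n!$ in the denominator yields precisely
\[
|\PAut(C)|=(n-(2^k-1)m)!\,(m!)^{2^k-1}\prod_{i=0}^{k-1}(2^k-2^i),
\]
as claimed.

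There is no real obstacle in this step; the computation is essentially bookkeeping. The one detail worth explicit verification is that the multinomial coefficient must be interpreted with exactly $2^k-1$ equal entries $m$, one for each nonempty subset $I\subseteq[k]$ indexing a block $A^I$ of the partition associated with $C$, so that its denominator contributes exactly $(m!)^{2^k-1}$ rather than some other power of $m!$. Once this is fixed, the cancellation above is automatic and the stated closed form follows immediately.
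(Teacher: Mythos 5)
Your proposal is correct and follows exactly the paper's route: the paper derives equation (\ref{order_of_PAut 1}) from the code count (\ref{number_of_codes2}) via the left-coset argument and then states the theorem after ``simplifying the right hand side,'' which is precisely the cancellation of $((2^k-1)m)!$ and $n!$ that you carry out. Your explicit verification that the multinomial has exactly $2^k-1$ entries equal to $m$ is a correct and worthwhile detail, but the argument is the same as the paper's.
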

Using a simple induction on $k$, one can easily show that the product $\prod_{i=0}^{k-1}(2^k-2^i)$ is always divisible by $6$ for $k\ge 2$. Hence, we have the following result.
\begin{cor}\label{Order of PAut}
If $C$ is a constant weight  code of dimension $\ge2$, then the order of its permutation automorphism group is a multiple of $6$.
\end{cor}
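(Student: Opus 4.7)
The plan is to invoke Theorem \ref{order_of_PAut 2} directly. Since $|\PAut(C)| = (n-(2^k-1)m)!\,(m!)^{2^k-1}\prod_{i=0}^{k-1}(2^k-2^i)$, the factorial and $m!$-power factors are positive integers, so it suffices to show that the single factor $P(k):=\prod_{i=0}^{k-1}(2^k-2^i)$ is divisible by $6$ whenever $k\ge 2$. (This quantity is, of course, $|GL_k(\mathbb{F}_2)|$, though I will not need to use that interpretation.)

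The cleanest route is a direct factorization. Pulling out the power of $2$ from each term, I would write
$$P(k)=\prod_{i=0}^{k-1}2^i(2^{k-i}-1)=2^{\,k(k-1)/2}\prod_{j=1}^{k}(2^j-1).$$
For $k\ge 2$, the exponent $k(k-1)/2\ge 1$, so the factor $2^{k(k-1)/2}$ contributes divisibility by $2$. The product $\prod_{j=1}^{k}(2^j-1)$ contains the term $2^2-1=3$ as soon as $k\ge 2$, so it contributes divisibility by $3$. Since $\gcd(2,3)=1$, this gives divisibility by $6$, completing the proof.

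If one prefers to stay close to the induction suggested in the statement, an alternative is to check the base case $k=2$ directly ($P(2)=(4-1)(4-2)=6$) and observe that $P(k+1)=\bigl(\prod_{i=0}^{k}(2^{k+1}-2^i)\bigr)$ contains the even factor $2^{k+1}-2$ and the factor $2^{k+1}-1$ or $2^{k+1}-2$ divisible by $3$ (which alternate according to the parity of $k+1$, since $2^{k+1}\bmod 3$ alternates between $1$ and $2$); either phrasing yields the conclusion. I do not anticipate any real obstacle here: the only point that needs a sentence of justification is the divisibility by $3$, and the factorization argument in the previous paragraph handles it uniformly without a case split.
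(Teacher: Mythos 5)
Your proposal is correct and follows the paper's approach exactly: reduce to Theorem \ref{order_of_PAut 2} and show that $\prod_{i=0}^{k-1}(2^k-2^i)$ is divisible by $6$ for $k\ge 2$. The only difference is that where the paper merely asserts this via a ``simple induction on $k$,'' you supply the explicit factorization $2^{k(k-1)/2}\prod_{j=1}^{k}(2^j-1)$, which isolates the factors $2$ and $3$ directly and is, if anything, a cleaner justification of the same arithmetic fact.
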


There are $1$-dimensional constant weight codes so that $\PAut(C)\cong C_2$ and $\PAut(C)\cong C_2\times C_2$, for example for  $C=\{0000,1100\}$ we have $\PAut(C)\cong C_2\times C_2$ and for $C=\{000,100\}$ has $\PAut(C)\cong C_2$.
For larger dimensional codes, it is quite hard to determine the isomorphism type of $\PAut(C)$ of a given constant weight code $C$. We also prove the following proposition.
\begin{pro}\label{weight4} Let $C$ be a $2$-dimensional constant weight code of weight $4$ and length $n\ge 6$. Then $\PAut(C)\cong S_2\times S_4\times S_{n-6}$.
\end{pro}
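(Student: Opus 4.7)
The plan is to identify $\PAut(C)$ via the explicit standard representative produced in Section 4. By Theorem \ref{Perm_Equiv} applied with $k=2$ and weight $w=4=2^{k-1}m$ (so $m=2$), it suffices to treat the code with the basis of Theorem \ref{construction}: namely $c_1$ with support $A_1=\{1,2,3,4\}$ and $c_2$ with support $A_2=\{1,2,5,6\}$. The corresponding $A^I$-partition (from Proposition \ref{intersection of Supports} and Theorem \ref{converse}) is $A^{12}=\{1,2\}$, $A^1=\{3,4\}$, $A^2=\{5,6\}$, and the positions $\{7,\ldots,n\}$ lie outside $A_1\cup A_2$.

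Next I would describe the action of $\PAut(C)$ concretely. The three nonzero codewords of $C$ have supports $A_1$, $A_2$, and $A_1\Delta A_2=\{3,4,5,6\}$, and their pairwise intersections are precisely the three $2$-element sets $A^{12}$, $A^1$, $A^2$. Any $\sigma\in\PAut(C)$ permutes the three nonzero codewords, hence their supports, hence the three blocks, and therefore also preserves the complement $\{7,\ldots,n\}$. Conversely, every permutation of $\{7,\ldots,n\}$ fixes every codeword, and every permutation of $\{1,\ldots,6\}$ that preserves the block system $\{\{1,2\},\{3,4\},\{5,6\}\}$ sends supports to supports and so lies in $\PAut(C)$. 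Consequently $\PAut(C)\cong G\times S_{n-6}$, where $G$ is the imprimitive wreath product $S_2\wr S_3$, i.e.\ the stabilizer in $S_6$ of that block system.

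The remaining step is the algebraic identification $S_2\wr S_3\cong S_2\times S_4$. Writing $S_2\wr S_3=(\mathbb{Z}/2)^3\rtimes S_3$ with $S_3$ permuting coordinates, the element $(1,1,1;e)$ generates a central subgroup $Z\cong\mathbb{Z}/2$. Let $V_0=\{v\in(\mathbb{Z}/2)^3 : v_1+v_2+v_3=0\}$; this is an $S_3$-invariant Klein four-subgroup, and $V_0\rtimes S_3\cong S_4$ because $S_3$ permutes the three nonzero vectors of $V_0$. Since $(1,1,1)\notin V_0$, we obtain an internal direct product $(\mathbb{Z}/2)^3\rtimes S_3=Z\times(V_0\rtimes S_3)\cong S_2\times S_4$. (This is the familiar identification of $S_2\wr S_3$ with the full symmetry group of the octahedron, splitting as rotations $S_4$ times the antipodal map $S_2$.) The order check $2\cdot 24\cdot (n-6)!=48(n-6)!$ agrees with Theorem \ref{order_of_PAut 2} specialized to $k=2$, $m=2$, confirming that the subgroup we exhibited is all of $\PAut(C)$.

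The main obstacle is minor: once the canonical form from Theorem \ref{construction} is fixed, the description of $\PAut(C)$ as a block-system stabilizer is essentially automatic from the $A^I$-partition characterization of Section 3, and the isomorphism $S_2\wr S_3\cong S_2\times S_4$ is a small piece of classical group theory rather than anything requiring further coding-theoretic input.
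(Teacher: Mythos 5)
Your proof is correct, but it takes a genuinely different route from the paper. The paper reduces to the length-$6$ representative $D=\{000000,111100,001111,110011\}$ via Theorem \ref{Perm_Equiv}, verifies $\PAut(D)\cong S_2\times S_4$ by a SAGE computation, confirms this by exhibiting the explicit generators $\sigma_1=(145236)$ and $\sigma_2=(1324)(56)$, and pins down the order as $6!/15=48$ using the enumeration formula (\ref{number_of_codes1}); it then appends $S_{n-6}$ for the padded coordinates. You instead argue structurally: the three nonzero supports of the canonical code are exactly the unions of pairs of blocks from the partition $A^{12}=\{1,2\}$, $A^1=\{3,4\}$, $A^2=\{5,6\}$, so $\PAut(C)$ is precisely the stabilizer of this block system times $S_{n-6}$, i.e.\ $(S_2\wr S_3)\times S_{n-6}$, and then you identify $S_2\wr S_3\cong S_2\times S_4$ by splitting $(\mathbb{Z}/2)^3\rtimes S_3$ as the central diagonal $\mathbb{Z}/2$ times $V_0\rtimes S_3\cong S_4$ --- all steps check out, including the two containments that make the order formula unnecessary except as a sanity check. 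Your approach buys a computer-free, conceptual proof that explains \emph{why} the group is $S_2\times S_4$ (it is really $S_2\wr S_3$ in its natural imprimitive action), and it is the same mechanism the authors later exploit to prove $\PAut\cong S_m\wr S_3$ for $2$-dimensional codes of weight $2m$ and length $3m$; the paper's approach buys concrete generators inside $S_6$ and a demonstration of how their counting formula determines $|\PAut(C)|$.
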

\begin{proof}
It is easy to see that $D=\{000000, 111100, 001111, 110011\}$ is a $2$-dimensional constant weight code of weight $4$ and length $6$. By Theorem \ref{Perm_Equiv}, any constant weight code of the same parameters is permutation equivalent to $D$ and they have isomorphic permutation automorphism group.  So for any $2$-dimensional constant weight code $C$ of weight $4$ and length $6$ , $\PAut(C)\cong \PAut(D)$. So for such codes it is enough to compute $\PAut(D)$. Using the computer program SAGE \cite{sage},  we  compute $\PAut(D)$ and show that it is isomorphic to  $ S_2\times S_4$. Indeed,  the permutations $\sigma_1=(145236)$ and $\sigma_2=(1324)(56)$ map $D$ to itself. So the group $G=\langle \sigma_1,\sigma_2\rangle$ generated by these permutations, which is also isomorphic to $ S_2\times S_4$, is a subgroup of $\PAut(D)$. By the formula (\ref{number_of_codes1}) or (\ref{number_of_codes2}), the number of distinct $2$-dimensional  constant weight codes of weight $4$ and length $6$ is  $15$, which is also the index of $\PAut(D)$ in the symmetric group $S_6$. So the order of $\PAut(D)$ is $6!/15 =48$. Since the order of $G$ is also $48$ and $G\le \PAut(D)$, $\PAut(D) = G$. Now, if the length $n>6$, then by adding $n-6$ zeroes to the end of each codeword in $D$, we can extend $D$ to a $2$-dimensional constant weight code $\widetilde{D}$ of weight $4$ and length $n>6$. And by Theorem \ref {Perm_Equiv}, any $2$-dimensional constant weight code of weight $4$ and length $n$ is permutation equivalent to $\widetilde{D}$. By the same arguments as above, one can easily show that $\PAut(\widetilde{D}) = \PAut(D)\times H$, where $H$ is the symmetric group on $n-6$ letters $\{7,8,\ldots n\}$. Hence, we get $\PAut(\widetilde{D})\cong S_2\times S_4\times S_{n-6}$.
\end{proof}

\begin{thm}Let $C$ be a two-dimensional binary linear constant weight code of weight $2m$ and length $3m$ where $m$ is a positive integer. Then $\PAut(C)\cong S_m\wr S_3$, the wreath product of $S_m$ by $S_3$.

\end{thm}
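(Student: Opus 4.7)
The plan is to identify $\PAut(C)$ with the setwise stabilizer in $S_{3m}$ of a specific $3$-block partition of $\{1,\ldots,3m\}$ into blocks of size $m$, and then recognize that stabilizer as $S_m \wr S_3$.

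First, by Theorem \ref{Perm_Equiv} I may replace $C$ with any convenient $2$-dimensional constant weight code of weight $2m$ and length $3m$, since permutation-equivalent codes have conjugate (hence isomorphic) permutation automorphism groups. Fix a basis $c_1,c_2$ with supports $A_1,A_2$. By Proposition \ref{intersection of Supports} (or the remark after Theorem \ref{converse}) the three atoms
$$X := A^{\{1\}},\quad Y := A^{\{2\}},\quad Z := A^{\{1,2\}}$$
each have size $m$, are pairwise disjoint, and cover all of $\{1,\ldots,3m\}$ (there is no room for an outside block since $3m = (2^k-1)m$). The supports of the three nonzero codewords are exactly the three pairwise unions
$$A_1 = X \sqcup Z,\qquad A_2 = Y \sqcup Z,\qquad \mathrm{Supp}(c_1+c_2) = A_1 \Delta A_2 = X \sqcup Y.$$

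Next I would show that $\sigma \in \PAut(C)$ if and only if $\sigma$ permutes the three blocks $\{X,Y,Z\}$ setwise. Since nonzero codewords of a binary linear code are determined by their supports, $\sigma \in \PAut(C)$ is equivalent to saying that $\sigma$ permutes the three-element family $\{A_1,A_2,A_1\Delta A_2\}$. The key observation is that each of $X,Y,Z$ is a pairwise intersection of two members of that family:
$$Z = A_1 \cap A_2,\quad X = A_1 \cap (A_1\Delta A_2),\quad Y = A_2 \cap (A_1\Delta A_2).$$
Hence any $\sigma$ permuting the three supports automatically permutes these intersections, i.e.\ permutes $\{X,Y,Z\}$. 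Conversely, if $\sigma$ permutes $\{X,Y,Z\}$, then it permutes their pairwise unions, which is exactly the family of nonzero supports. This gives the desired bijective identification.

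Finally, the setwise stabilizer in $S_{3m}$ of a partition of $\{1,\ldots,3m\}$ into three labelled blocks of size $m$ is, by the standard imprimitive-action description, the wreath product $S_m \wr S_3$: within each block one permutes the $m$ coordinates freely (giving the base group $S_m^3$), and the $S_3$ on top permutes the three blocks. As a sanity check, this group has order $(m!)^3 \cdot 3! = 6(m!)^3$, which agrees with the formula of Theorem \ref{order_of_PAut 2} evaluated at $k=2$, $n=3m$: $(n-(2^2-1)m)!(m!)^{2^2-1}\prod_{i=0}^{1}(2^2-2^i) = 0!\,(m!)^3 \cdot 3 \cdot 2 = 6(m!)^3$. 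I do not anticipate a serious obstacle; the only place needing care is the "if and only if" between permuting supports and permuting atoms, and that is handled cleanly by the intersection identities above.
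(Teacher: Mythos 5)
Your proof is correct, but it takes a genuinely different route from the paper's. The paper works with the explicit code $D$ from Theorem \ref{construction}, exhibits $S_m\times S_m\times S_m$ and then the full wreath product $S_m\wr S_3$ as subgroups of $\PAut(D)$ by permuting the three blocks, and then invokes the order formula of Theorem \ref{order_of_PAut 2} to get $|\PAut(D)|=6(m!)^3=|S_m\wr S_3|$ and conclude equality. You instead characterize $\PAut(C)$ intrinsically: the ``$\sigma$ permutes the supports $\{A_1,A_2,A_1\Delta A_2\}$ if and only if $\sigma$ permutes the atoms $\{X,Y,Z\}$'' equivalence, justified in one direction by the pairwise-intersection identities $Z=A_1\cap A_2$, $X=A_1\cap(A_1\Delta A_2)$, $Y=A_2\cap(A_1\Delta A_2)$ and in the other by taking pairwise unions, identifies $\PAut(C)$ exactly with the setwise stabilizer of a $3$-block partition into blocks of size $m$, which is $S_m\wr S_3$ by the standard imprimitive description. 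The crucial point that the three atoms exhaust all $3m$ coordinates (because $n=(2^2-1)m$ leaves no fixed ``outside'' block) is correctly noted. Your argument buys independence from the enumerative order formula --- the counting in Theorem \ref{order_of_PAut 2} becomes a mere sanity check rather than a load-bearing step --- and it yields the isomorphism as a concrete equality of subgroups of $S_{3m}$ rather than an abstract one forced by an order count; the paper's argument is shorter given that the order formula is already available, but needs the explicit normal form of Theorem \ref{construction} to see the subgroup containment. Both are valid.
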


\begin{proof} Let $D$ be the code spanned by the set  $$\{c_1,c_2\} = \{(\overbrace{1,1,\ldots,1}^{2m},\overbrace{0,0,\ldots,0}^{m}), \quad (\overbrace{1,\ldots,1}^{m},\overbrace{0,\ldots,0}^{m},\overbrace{1,\ldots,1}^{m})\}$$
By Theorem \ref{construction}, it is clear that $D$ is a constant weight code. Any $2$-dimensional binary constant weight of weight $2m$ and length $3m$ is permutation equivalent to $D$ by Theorem \ref{Perm_Equiv}. As permutation equivalent codes have isomorphic permutation automorphism groups, it is enough to calculate $PAut(D)$.  Clearly, $S_m\times S_m\times S_m$ is a subgroup of $\PAut(D)$ as each block of $m$-zeroes or $m$-ones is fixed under the action of $S_m$.  If we permute the three blocks of $c_2$, we also get either $c_1$, $c_2$ or $c_1+c_2$. Therefore, the wreath product $S_m\wr S_3$ of $S_m$ by $S_3$ is also a subgroup of $\PAut(D)$. By Theorem (\ref{order_of_PAut 2}), we have $\mid \PAut(D)\mid  = 6(m!)^3$. As $\mid S_m\wr S_3\mid$ is also $6(m!)^3$, we conclude that $\PAut(D) = S_m\wr S_3$.\end{proof}

Recall that the wreath product $S_m\wr S_3$ is the split extension $(S_m\times S_m\times S_m )\rtimes S_3$. For the details about the wreath product, one can see [\cite{Leedham}, page 38].

\begin{rem} If we consider the code $C$ with a basis $c_1,c_2,\ldots,c_k$ given in Theorem \ref{construction} we can observe that whenever $m\geq 3$, the group $G = \langle(123),(12)\rangle$ is a subgroup of $\PAut(C)$. But, it is not true for the cases $m=1$ and $m=2$, that is, the group $G$ does not fix the code $C$.  However, for these cases we can construct explicitly the subgroups of $\PAut(C)$ which are isomorphic to $S_3$. For the first case, where $m=1$, note that, the supports of the last two basis elements $c_{k-1}$ and $c_k$ are of the form $A_{k-1}=\{1,2,5,6,9,10,\ldots,2^k-3,2^k-2\}$ and $A_{k}=\{1,3,5,7,9,11,\ldots,2^k-3,2^k-1\}$, respectively. Clearly, the permutations $$\sigma_1=(23)(67)(10\;11)\ldots(2^k-2\;2^k-1)$$
$$\sigma_2=(123)(567)(9\;10\;11)\ldots(2^k-3\;2^k-2\;2^k-1)$$ fix the code $C$. So the group $G_1$ generated by $\sigma_1$ and $\sigma_2$ is a subgroup of $\PAut(C)$. Since $G_1$ is not abelian and it has order 6, it is isomorphic to $S_3$.

For the latter case ($m=2$), the supports of the last two basis elements are of the form
$$A_{k-1}=\{1,2,3,4,9,10,11,12,\ldots,2^{k+1}-7,2^{k+1}-6,2^{k+1}-5,2^{k+1}-4\} \mbox{ and } $$ $$ A_{k}=\{1,2,5,6,9,10,\ldots,2^{k+1}-3,2^{k+1}-2\}.$$
By direct calculations one can show that the following permutations
$$\tau_1=(35)(46)(11\;13)(12\;14)\ldots(2^{k+1}-5\;\ 2^{k+1}-3)(2^{k+1}-4\;\ 2^{k+1}-2) $$
$$\tau_2=\prod_{i=2}^k(2^{i+1}-7\;\ 2^{i+1}-5\;\ 2^{i+1}-3)
(2^{i+1}-6\;\ 2^{i+1}-4\;\ 2^{i+1}-2)$$
fix the code $C$. So the group $G_2 = \langle\tau_1,\tau_2\rangle$, is a subgroup of $\PAut(C)$. Again, since $G_2$ is non-abelian and has order 6, it is isomorphic to $S_3$. Hence, we just proved the following proposition.
\end{rem}
\begin{pro} Let $C$ be a constant weight code of dimension $\geq 2$. Then $PAut(C)$ has a subgroup isomorphic to $S_3$.
\end{pro}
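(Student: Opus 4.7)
The plan is to reduce the statement to the standard code constructed in Theorem \ref{construction} and then exhibit, case by case on the parameter $m$, an explicit non-abelian subgroup of order $6$ in its permutation automorphism group. The reduction step uses Theorem \ref{Perm_Equiv}: any $k$-dimensional constant weight code $C$ of weight $2^{k-1}m$ has length $n \geq (2^k-1)m$, and by (the remark following) Theorem \ref{construction} it is permutation equivalent to the code $C_0$ generated by the basis $c_1,\ldots,c_k$ of Theorem \ref{construction}, possibly with $n-(2^k-1)m$ trailing zeros appended. Since conjugation by the equivalence permutation carries $\PAut(C_0)$ isomorphically onto $\PAut(C)$, a copy of $S_3$ in the former produces a copy of $S_3$ in the latter. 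So the task reduces to exhibiting $S_3 \leq \PAut(C_0)$ for the standard code $C_0$.

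Next I would split into the three cases $m \geq 3$, $m = 1$, $m = 2$, exactly as in the Remark preceding the proposition. For $m \geq 3$, I would observe that the first three coordinate positions $\{1,2,3\}$ all lie in the block where $c_1 = c_2 = c_3 = \cdots = c_k = 1$ (since each basis vector begins with at least $m \geq 3$ ones), so any permutation supported on $\{1,2,3\}$ preserves every basis vector pointwise; hence the standard $S_3$ acting on $\{1,2,3\}$ sits inside $\PAut(C_0)$. For $m=1$ and $m=2$, the coordinates $\{1,2,3\}$ no longer all lie in a common constant block of every basis vector, so I instead verify directly that the permutations $\sigma_1,\sigma_2$ (respectively $\tau_1,\tau_2$) written in the Remark send each basis vector $c_i$ to another codeword of $C_0$. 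The key computational step is to check compatibility with $c_{k-1}$ and $c_k$, whose supports interleave with period $2$ and $4$ respectively; the earlier basis vectors $c_1,\ldots,c_{k-2}$ are constant on each $m$-block and hence on each orbit of $\sigma_j$ or $\tau_j$, so they are fixed automatically.

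Once the generating permutations are shown to preserve $C_0$, I would argue that the group $G = \langle \sigma_1,\sigma_2\rangle$ (or $\langle \tau_1,\tau_2 \rangle$) has order exactly $6$: the generators are explicitly of orders $2$ and $3$, so $|G|$ is a multiple of $6$. To bound $|G|$ from above, I would note that each generator restricts to a well-defined permutation of a single set of three ``orbit indices'' (the triples of $m$-blocks corresponding to the nonzero coordinate columns of $c_{k-1}$ and $c_k$), and that the action of $G$ factors through the natural $S_3$ on these three indices; this gives $|G| \leq 6$. Combined with non-abelianness (easily witnessed by $\sigma_1 \sigma_2 \neq \sigma_2 \sigma_1$), we conclude $G \cong S_3$.

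The main obstacle, and essentially the only nontrivial point, is the explicit verification for $m=1$ and $m=2$ that the permutations listed in the Remark preserve the code $C_0$. This is a direct but slightly bookkeeping-heavy check of the images $\sigma_j(c_i)$ and $\tau_j(c_i)$ for $i \in \{k-1,k\}$; the cleanest formulation is to track the induced action on the partition $\{A^I : \emptyset \neq I \subseteq [k]\}$ of Theorem \ref{converse}, since a permutation preserves $C_0$ if and only if it permutes the cells $A^I$ among themselves in a way induced by a permutation of $[k]$ acting on the index sets $I$.
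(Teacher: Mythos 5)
Your proposal follows essentially the same route as the paper: the paper's proof is exactly the Remark preceding the proposition, which reduces (via Theorem \ref{Perm_Equiv} and the standard code of Theorem \ref{construction}) to the three cases $m\ge 3$, $m=1$, $m=2$ and exhibits the very same subgroups $\langle(123),(12)\rangle$, $\langle\sigma_1,\sigma_2\rangle$, $\langle\tau_1,\tau_2\rangle$ of $\PAut(C)$. The only (harmless) imprecision is that for $m=1,2$ the orbits of $\sigma_j,\tau_j$ lie inside the $4m$-blocks on which $c_1,\dots,c_{k-2}$ are constant rather than inside single $m$-blocks; your added detail on why the generated group has order exactly $6$ merely fills in what the paper asserts without proof.
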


By Proposition \ref{Order of PAut} we have $|\PAut(C)|\ge 6$, for any constant weight code of dimension $k\ge 2$. So it is natural to ask for which constant weight codes, permutation automorphism group is isomorphic to $S_3$.
\begin{thm} Let $C$ be a  constant weight code of dimension $k\ge 2$. Then $\PAut(C)\cong S_3$ if and only if $k=2$, $wt(C)=2$ and the length of $C$ is either $3$ or $4$.
\end{thm}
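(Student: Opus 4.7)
The plan is to convert the biconditional into a numerical question about $|\PAut(C)|$, solve the resulting equation using Theorem \ref{order_of_PAut 2}, and then invoke Proposition \ref{weight2} for the converse.

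First I would note that by the preceding proposition $\PAut(C)$ always contains a subgroup isomorphic to $S_3$, so the condition $\PAut(C)\cong S_3$ is equivalent to the purely numerical condition $|\PAut(C)|=6$. Writing $wt(C)=2^{k-1}m$ in accordance with Corollary \ref{atleastweight}, the order formula of Theorem \ref{order_of_PAut 2} reduces the forward direction to solving
\[
(n-(2^k-1)m)!\cdot(m!)^{2^k-1}\cdot\prod_{i=0}^{k-1}(2^k-2^i)=6
\]
in positive integers with $n\geq (2^k-1)m$.

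Next I would bound the third factor (which is the order of $GL_k(\mathbb{F}_2)$) from below. For $k=2$ it equals $3\cdot 2=6$, while for $k\geq 3$ it contains the three factors $(2^k-1)(2^k-2)(2^k-4)\geq 7\cdot 6\cdot 4=168$, already exceeding $6$. Hence $k=2$ is forced, and the remaining equation $(n-3m)!\cdot(m!)^3=1$ admits only the solution $m=1$ together with $n-3\in\{0,1\}$. These give $wt(C)=2m=2$ and $n\in\{3,4\}$, yielding the required parameters.

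For the converse, assuming $k=2$, $wt(C)=2$, and $n\in\{3,4\}$, Proposition \ref{weight2} directly gives $\PAut(C)\cong S_3\times S_{n-3}$; since $S_{n-3}$ is the trivial group for both $n=3$ (where it is $S_0$) and $n=4$ (where it is $S_1$), we conclude $\PAut(C)\cong S_3$. The only non-routine step is the lower bound on $|GL_k(\mathbb{F}_2)|$ for $k\geq 3$, which is immediate from its $k=3$ value, so I do not anticipate any real obstacle beyond correctly assembling the earlier results.
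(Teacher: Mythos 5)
Your proposal is correct and follows essentially the same route as the paper: the ``if'' direction via Proposition \ref{weight2} (with $S_{n-3}$ trivial for $n=3,4$), and the ``only if'' direction by solving $|\PAut(C)|=6$ in the order formula of Theorem \ref{order_of_PAut 2}. You merely supply the arithmetic the paper leaves as a parenthetical remark (the bound $\prod_{i=0}^{k-1}(2^k-2^i)\ge 168$ for $k\ge 3$ forcing $k=2$, then $(n-3m)!\,(m!)^3=1$ forcing $m=1$ and $n\in\{3,4\}$), and your opening reduction of $\PAut(C)\cong S_3$ to $|\PAut(C)|=6$ is harmless but not actually needed for the direction in which you use it.
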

\begin{proof} In Proposition \ref{weight2}, we proved that $\PAut(C)\cong S_3$, whenever $C$ is two-dimensional constant weight code of weight $2$ and length $3$ or $4$. Converse follows from the formula in (\ref{order_of_PAut 2}) ($|\PAut(C)| = 6$ only when $k=2$, $m=1$ and $n=3$ or $4$).
\end{proof}

\section{Concluding Remarks} 
Let $G$ be a finite group and $K$ a finite field. The group algebra $KG$ is a $K$-vector space spanned by $G$ with a ring structure on it. A left $G$-code (or group code for $G$)
  is a left ideal $C$ in $KG$. The following theorem is a well known characterization of left $G$-codes.

\begin{thm}\label{groupcode}[\cite{BernalRioSimon}, Theorem 1.2]
Let $G$ be a finite group of order $n$ and let $C$ be a linear code in $K^n$. 
Then $C$ is a left  $G$-code if and only if $G$ is isomorphic to a transitive subgroup of $S_n$ contained in $\PAut(C)$.

\end{thm}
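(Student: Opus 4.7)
The plan is to prove both implications by explicitly identifying $K^n$ with the group algebra $KG$ so that the coordinate-permutation action of $S_n$ translates to left multiplication in $KG$. Throughout, I use the convention from the start of Section~\ref{Sec:6} that $\sigma\cdot e_j=e_{\sigma(j)}$ on the standard basis, which follows directly from $(\sigma v)_i = v_{\sigma^{-1}(i)}$.

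For the forward implication, I would start by enumerating $G=\{g_1,\ldots,g_n\}$ and defining the $K$-linear isomorphism $\Phi: KG \to K^n$ by $\Phi(g_i)=e_i$. For each $g\in G$, the left multiplication map $\lambda_g: KG\to KG$, $x\mapsto gx$, permutes the basis $G$, inducing a permutation $\pi_g\in S_n$ via $gg_i=g_{\pi_g(i)}$. The assignment $\rho: g\mapsto \pi_g$ is the left regular representation: it is an injective group homomorphism whose image acts regularly, and hence transitively, on $\{1,\ldots,n\}$, and by construction $\Phi\circ\lambda_g=\pi_g\circ\Phi$. If $C$ is a left ideal of $KG$, then $\lambda_g(C)\subseteq C$ for every $g\in G$, so $\pi_g(\Phi(C))\subseteq \Phi(C)$, giving $\rho(G)\subseteq \PAut(\Phi(C))$. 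Thus $G\cong\rho(G)$ is a transitive subgroup of $S_n$ contained in $\PAut(C)$ after identifying $C$ with $\Phi(C)$.

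For the reverse implication, suppose $H\leq S_n$ is transitive of order $n$, $H\subseteq \PAut(C)$, and $G\cong H$. Orbit-stabilizer applied to transitivity of $H$ with $|H|=n$ forces point stabilizers to be trivial, so $H$ acts regularly on $\{1,\ldots,n\}$. Fix $p_0\in\{1,\ldots,n\}$ and define the bijection $\phi:H\to\{1,\ldots,n\}$ by $\phi(h)=h(p_0)$. Use $\phi$ to define a $K$-linear isomorphism $\Psi: KH\to K^n$ by $\Psi(h)=e_{\phi(h)}$. The crucial computation is that for any $h,g\in H$,
\[
h\cdot \Psi(g)=h\cdot e_{g(p_0)}=e_{hg(p_0)}=e_{\phi(hg)}=\Psi(hg),
\]
so the $S_n$-action of $H$ on $K^n$ corresponds, under $\Psi$, to left multiplication by $H$ in $KH$. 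Since $H\subseteq\PAut(C)$, we have $h\cdot C\subseteq C$ for all $h\in H$, and $K$-linearity then yields $KH\cdot \Psi^{-1}(C)\subseteq \Psi^{-1}(C)$; hence $\Psi^{-1}(C)$ is a left ideal of $KH$. Any group isomorphism $G\cong H$ extends to a $K$-algebra isomorphism $KG\cong KH$, transporting this left ideal back to a left ideal of $KG$ and exhibiting $C$ as a left $G$-code.

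The main obstacle is not technical but conceptual: one must carefully track the two distinct roles played by the acting group, namely the abstract group $G$ acting on $KG$ by left multiplication and the concrete subgroup $H\leq S_n$ acting by coordinate permutations on $K^n$, and verify that a single bijection $\phi$ arising from regularity reconciles them. Once the key identity $h\cdot e_{\phi(g)}=e_{\phi(hg)}$ is established, both directions reduce to transporting left ideals across the isomorphism $\Psi$, and transitivity is needed precisely to ensure $\phi$ is a bijection (equivalently, that the point stabilizers are trivial).
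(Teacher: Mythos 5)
Your proof is correct, but note that the paper itself offers no proof of this statement: it is quoted verbatim from the cited reference (Bernal--del R\'{\i}o--Sim\'on, Theorem 1.2), so there is no in-paper argument to compare against. Your argument --- identifying $K^n$ with $KG$ via the left regular representation in the forward direction, and using orbit--stabilizer to upgrade transitivity of an order-$n$ subgroup to regularity (so that $h\mapsto h(p_0)$ is a bijection) in the converse --- is the standard proof and is essentially the one given in that reference; the only step worth making fully explicit is that $\pi_g(\Phi(C))\subseteq\Phi(C)$ for all $g$ already forces equality, since each $\pi_g$ is invertible and the inclusion also holds for $g^{-1}$.
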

Whenever a linear code is a left $G$-code, there are more algebraic tools to understand such codes (see \cite{BorelloWillems}). By using Proposition \ref{weight2} and Theorem \ref{groupcode},
it is easy to conclude that there is no $2$-dimensional binary linear constant weight code of weight $2$ and length $n>3$ which is a left $G$-code for any group $G$. 
Moreover as we have the formula \begin{equation}\label{order_of_PAut 2}
|\PAut(C)|=  (n-(2^k-1)m)!(m!)^{2^k-1}\prod_{i=0}^{k-1}(2^k-2^i) .
\end{equation} for a $k$-dimensional  binary linear constant weight code of weight $2^{k-1}m$ and the length $n\geq (2^k-1)m$, for some specific parameters there is no constant weight code which is a left $G$-code for any group $G$. For example, there is no $2$-dimensional constant weight code of weight $4$ and length $7$ which is a left $G$-code for any group $G$.

Naturally, one can ask the following questions:

\begin{que}  What are the conditions on the parameters of a binary constant weight code $C$ so that $C$ is a left $G$-code for a group $G$?
\end{que}
More generally, 
\begin{que} Which constant weight codes are left $G$-codes?

\end{que}

\section*{Acknowledgments}  

We would like to thank to Prof. Arrigo Bonisoli, Prof. Giuseppe Mazzuoccolo and Prof. Simona Bonvicini for sending a copy of the paper
 "Every equidistant linear code is a sequence of dual Hamming codes" by A. Bonisoli.
\bibliographystyle{amsplain}

\end{document}